\theoremstyle{break}        
\newtheorem{Prop}{Property}[section]
\theoremstyle{break}        
\newtheorem{Lemma}{Lemma}[section]
\theoremstyle{break}        
\definecolor{myred}{RGB}{232,102,102}
\definecolor{myblue}{RGB}{187,187,255}
\definecolor{myviolet}{RGB}{210,145,178}
\definecolor{myvioletc}{RGB}{45,110,77}
\definecolor{mygreen}{RGB}{34,139,34}
\definecolor{myorange}{RGB}{255,165,0}
\definecolor{OliveGreen}{RGB}{85,107,47}
\definecolor{NavyBlue}{RGB}{0,0,128}
\newcommand{\1}{\mathbbm{1}}
\newcommand{\be}{\begin{equation}}
\newcommand{\ee}{\end{equation}}
\newcommand{\ba}{\begin{aligned}}
\newcommand{\ea}{\end{aligned}}
\theoremstyle{plain}
\newcommand\mydots{\hbox to 1em{.\hss.\hss.}}
\newcommand{\bea}{\begin{eqnarray}}
\newcommand{\eea}{\end{eqnarray}}
\def\bes{\begin{subequations}}
\def\esu{\end{subequations}}
\def\doi{http://dx.doi.org/}
\begin{document}

% TODO: write your article's title here. 
% The article title is centered, Large boldface, and should fit in two lines
\begin{center}{\Large \textbf{Operator Entanglement in Local Quantum Circuits II:\\ Solitons in Chains of Qubits}}\end{center}

% TODO: write the author list here. Use initials + surname format.
% Separate subsequent authors by a comma, omit comma at the end of the list.
% Mark the corresponding author with a superscript *. 
\begin{center}
B. Bertini\textsuperscript{*}, P. Kos, and T. Prosen
\end{center}

% TODO: write all affiliations here. 
% Format: institute, city, country
\begin{center}
Department of physics, FMF, University of Ljubljana, Jadranska 19, SI-1000 Ljubljana, Slovenia

\vspace{0.5cm}

% TODO: provide email address of corresponding author
* bruno.bertini@fmf.uni-lj.si
\end{center}

\begin{center}
\today
\end{center}

% For convenience during refereeing: line numbers
%\linenumbers

\section*{Abstract}
{\bf 
We provide exact results for the dynamics of local-operator entanglement in quantum circuits with two-dimensional wires featuring \emph{ultralocal solitons}, i.e. single-site operators which, up to a phase, are simply shifted by the time evolution. We classify all circuits allowing for ultralocal solitons and show that only dual-unitary circuits can feature moving ultralocal solitons. Then, we rigorously prove that if a circuit has an ultralocal soliton moving to the left (right), the entanglement of local operators initially supported on even (odd) sites saturates to a constant value and its dynamics can be computed exactly. Importantly, this does not bound the growth of complexity in \emph{chiral circuits}, where solitons move only in one direction, say to the left. Indeed, in this case we observe numerically that operators on the odd sublattice have unbounded entanglement. Finally, we present a closed-form expression for the local-operator entanglement entropies in circuits with ultralocal solitons moving in both directions. Our results hold irrespectively of integrability. 

}

% TODO: include a table of contents (optional)
% Guideline: if your paper is longer that 6 pages, include a TOC
% To remove the TOC, simply cut the following block
\vspace{10pt}
\noindent\rule{\textwidth}{1pt}
\tableofcontents\thispagestyle{fancy}
\noindent\rule{\textwidth}{1pt}
\vspace{10pt}

%%%%%%%%%%%%%
\section{Introduction}%%%
\label{intro}%%%%%%%
%%%%%%%%%%%%%
{
This is the second of two papers on the dynamics of local-operator entanglement in local quantum circuits. In the first part of our work \cite{BKP:OEergodicandmixing}, which in the following we will refer to as ``Paper I'', we focussed on the dynamics of operator entanglement in chaotic dual unitary circuits \cite{BKP:dualunitary}. Here, instead, we do not assume dual-unitarity and we focus on circuits admitting \emph{solitons}: operators that are simply shifted by the time evolution (see below for a precise definition). Specifically, we consider periodically driven spin-$(d-1)/2$ chains with Floquet operator given by    
\be
\mathbb U=\mathbb U^{\rm o}\mathbb U^{\rm e} = \mathbb T^{\phantom{\dag}}_{2L} U^{\otimes L} \mathbb T^\dag_{2L} U^{\otimes L} =
\begin{tikzpicture}[baseline=(current  bounding  box.center), scale=0.55]
\foreach \i in {0,1}
{
\Text[x=1.25,y=-2+2*\i]{\scriptsize$\i$}
}
\foreach \i in {1}
{
\Text[x=1.25,y=-2+\i]{\small$\frac{\i}{2}$}
}
\foreach \i in {1,3,5}
{
\Text[x=-6.5+\i+1,y=-2.6]{\small$\frac{\i}{2}$}
}
\foreach \i in {1}
{
\Text[x=-6.8-2*\i+2,y=-2.6]{\small$\,-\frac{\i}{2}$}
}
\foreach \i in {0,1,2}
{
\Text[x=-6.5+2*\i+1,y=-2.6]{\scriptsize${\i}$}
}
\foreach \i in {-1}
{
\Text[x=-6.8+2*\i+1,y=-2.6]{\scriptsize${\i}$}
}
\Text[x=-8.8,y=-2.6]{$\cdots$}
\Text[x=.5,y=-2.6]{$\cdots$}
\foreach \i in {3,...,13}
{
\draw[thick, gray, dashed] (-12.5+\i,-2.1) -- (-12.5+\i,0.3);
}
\foreach \i in {3,...,5}
{
\draw[thick, gray, dashed] (-9.75,3-\i) -- (.75,3-\i);
}
\foreach \i in {0,...,4}
{
\draw[thick] (-.5-2*\i,-1) -- (0.525-2*\i,0.025);
\draw[thick] (-0.525-2*\i,0.025) -- (0.5-2*\i,-1);
\draw[thick, fill=myred, rounded corners=2pt] (-0.25-2*\i,-0.25) rectangle (.25-2*\i,-0.75);
\draw[thick] (-2*\i,-0.35) -- (-2*\i+0.15,-0.35) -- (-2*\i+0.15,-0.5);%
}
\foreach \jj[evaluate=\jj as \j using -2*(ceil(\jj/2)-\jj/2)] in {0}
\foreach \i in {1,...,5}
{
\draw[thick] (.5-2*\i-1*\j,-2-1*\jj) -- (1-2*\i-1*\j,-1.5-\jj);
\draw[thick] (1-2*\i-1*\j,-1.5-1*\jj) -- (1.5-2*\i-1*\j,-2-\jj);
\draw[thick] (.5-2*\i-1*\j,-1-1*\jj) -- (1-2*\i-1*\j,-1.5-\jj);
\draw[thick] (1-2*\i-1*\j,-1.5-1*\jj) -- (1.5-2*\i-1*\j,-1-\jj);
\draw[thick, fill=myred, rounded corners=2pt] (0.75-2*\i-1*\j,-1.75-\jj) rectangle (1.25-2*\i-1*\j,-1.25-\jj);
\draw[thick] (-2*\i+1,-1.35-\jj) -- (-2*\i+1.15,-1.35-\jj) -- (-2*\i+1.15,-1.5-\jj);%
}
\Text[x=1.8,y=-1]{.}
\Text[x=-8.8,y=-2.6]{$\cdots$}
\Text[x=.5,y=-2.6]{$\cdots$}
\end{tikzpicture}
\end{equation}
Following the (diagrammatic) notation of Paper I
\begin{equation}
\label{eq:graphical}
\begin{tikzpicture}[baseline=(current  bounding  box.center), scale=1]
\def\eps{0.5};
\draw[thick] (-4.25,0.5) -- (-3.25,-0.5);
\draw[thick] (-4.25,-0.5) -- (-3.25,0.5);
\draw[thick, fill=myred, rounded corners=2pt] (-4,0.25) rectangle (-3.5,-0.25);
\draw[thick] (-3.75,0.15) -- (-3.6,0.15) -- (-3.6,0.15) -- (-3.6,0);%
\Text[x=-5.25,y=0.05]{$U=$}
\end{tikzpicture}
\end{equation}
denotes the unitary ``local gate'' in ${\rm End}(\mathbb C^d\otimes\mathbb C^d)$ and $\mathbb T^{\phantom{\dag}}_{\ell}$ the $\ell-$periodic translation by one site. Details of notation exactly follow Paper I. In this framework we call ``soliton" a traceless local operator $\tilde a$ (in general acting non-trivially on $r$ sites) that, up to a phase, is simply shifted by an number of sites $x$ when conjugated with the Floquet operator (one period of time evolution)
\be
\mathbb U^\dag \tilde a_y \mathbb U =e^{i \phi} \tilde a_{y+x},\qquad\qquad\qquad\phi\in[0,2\pi],\qquad\qquad\qquad x,y\in\frac{1}{2}\mathbb Z_{2L}\,,
\label{eq:conservedglobal}
\ee 
where, for any $b\in{\rm End}(\mathbb C^{d r})$, we define $b_y\in{\rm End}(\mathbb C^{d L})$ by  
\be
b_y\equiv \underbrace{\1\otimes\cdots\otimes\1}_{L-1+2y} \otimes b \otimes \underbrace{\1\otimes\cdots\otimes\1}_{L-2y+r-1}\,.
\label{eq:by}
\ee
Due to the strict light-cone structure of the quantum circuit, only a restricted set of values of $x$ are allowed in \eqref{eq:conservedglobal}. Specifically ${x\in\{-1/2,0,1/2,1\}}$ if $y$ is 
integer and ${x\in\{-1,-1/2,0,1/2\}}$ if $y$ is half-odd-integer. Note that the condition of vanishing trace ensures that $\tilde a$ is orthogonal to the identity operator and hence non-trivial. 

In this paper we analyse the consequences of the presence of solitons on the dynamics of the local-operator entanglement. For the sake of simplicity, here we consider spin-$1/2$ chains, i.e. circuits where the local Hilbert space is two-dimensional ($d=2$), featuring ``ultralocal" solitons, i.e. solitons acting non-trivially only on \emph{a single site} ($r=1$ in Eq.~\eqref{eq:by}). Specifically, the rest of this paper is laid out as follows. In Section~\ref{sec:OECM} we present a detailed classification of circuits with ultralocal solitons and two-dimensional local Hilbert space. In Section~\ref{sec:OE} we recall the definition of the local-operator-entanglement entropies in local quantum circuits and their expression in terms of partition functions on appropriate space-time surfaces. In Section~\ref{sec:DOE} we determine the dynamics of local-operator-entanglement in circuits with solitons presenting closed-form expressions for the entanglement entropies. Finally, Section~\ref{sec:conclusions} contains our conclusions. The most technical aspects of our analysis are relegated to the two appendices.   

%%%%%%%%%%%%
\section{Classification}%
\label{sec:OECM}%%%
%%%%%%%%%%%%

Given the definition \eqref{eq:conservedglobal}, a natural question is to find all possible quantum circuits, i.e. all possible local gates $U$, admitting solitons. For ultralocal solitons in spin-$1/2$ chains, this task is explicitly carried out in Appendix~\ref{app:proof} while here we discuss the results. Interestingly, we find that \eqref{eq:conservedglobal} does not have solutions for all allowed values of $x$: only $x=0$ and ${x=1-2\,{\rm mod}(2y,2)}\in\{\pm 1\}$ lead to a consistent solution. In other words solitons can either stay still or move at the maximal speed: if they were initially on integer sites they move to the right and to the left otherwise. The three kinds of allowed solitons impose the following conditions on the local gate
\bes
\begin{align}
&\mathbb U^\dag \tilde a_y \mathbb U =e^{i \phi} \tilde a_{y}& & \Leftrightarrow & &\!\!\!\begin{cases} 
U^\dag (a\otimes\1) U =\pm (b\otimes \1)\\
U^\dag (\1\otimes b) U =\pm (\1\otimes a)\\
\end{cases}\!\!\!\!,\label{eq:stillsoliton}\\
&\mathbb U^\dag \tilde a_y \mathbb U =e^{i \phi} \tilde a_{y+1} \,\,\,\,{\rm and}\,\,\,\, y\in\mathbb Z_{L} & & \Leftrightarrow & & U^\dag (a\otimes\1) U =\pm (\1\otimes a),\label{eq:rightmovingsoliton}\\
&\mathbb U^\dag \tilde a_y \mathbb U =e^{i \phi} \tilde a_{y-1} \,\,\,\,{\rm and}\,\,\,\, y-\frac{1}{2}\in\mathbb Z_{L} & & \Leftrightarrow & & U^\dag (\1\otimes a) U =\pm (a\otimes  \1),\label{eq:leftmovingsoliton}
\end{align}
\label{eq:solitons}
\esu

\noindent
where $a$ and $b$ are some hermitian traceless operators in $\rm{End}(\mathbb C^2)$ that can be taken Hilbert-Schmidt normalised 
\be
\frac{1}{2}{\rm tr}[a a^\dag]=1\,,\qquad\qquad \frac{1}{2}{\rm tr}[b b^\dag]=1\,.
\ee
The relations \eqref{eq:solitons} highly constrain the dynamics of the circuit. In particular, they generate an exponentially large number of local conservation laws\footnote{Note that the expectation values of these ``charges" generically oscillate in time with frequency $\phi$. Similar persistent oscillatory quantities have been recently considered in \cite{MBJ:timecrystals}.} of the form 
\be
\sum_{y\in\mathbb Z}\prod_{x_k} \tilde{a}_{x_k+y},
\ee
where $\{x_k\}$ are either all integer (if (\ref{eq:rightmovingsoliton})), all half-integer (if (\ref{eq:leftmovingsoliton})), or mixed (if (\ref{eq:stillsoliton})). Generically, however, these conservation laws do not follow the Yang-Baxter structure observed in integrable quantum circuits \cite{XXZtrot1}. Indeed, while for integrable circuits the gate can be written as $U=\check R(\lambda_0)$ --- where $\check R(\lambda)$ is a solution of the Yang-Bater equation and $\lambda_0$ a fixed parameter --- this is not the case for all gates satisfying \eqref{eq:solitons}. This situation is somehow reminiscent to what happens in kinetically constrained models, such as the so called PXP model \cite{Lukin,Abanin}. In both cases there is an
ergodicity breaking (known as {\em scarring} in the latter case) in certain subsectors of the Hilbert space. However, it is currently not clear whether the existence of exponentially many conserved operators is sufficient to enforce an effective reduction of Hilbert space like in the case of PXP model. Other similar cases are soliton gases in reversible cellular automata like the so-called Rule 54~\cite{Bobenko1, Bobenko2, Bobenko3, Katja, Largedeviation, Sarang, Sarang2, ADM:OperatorEntanglement} or the model of hard core classical particles studied in~\cite{Katja2, Katja3}.

In the following subsections we find all possible local gates solving the relations (\ref{eq:stillsoliton}-\ref{eq:leftmovingsoliton}) and use them to determine the constrains imposed by the presence of solitons on the dynamics of ultralocal operators. 

\subsection{Still Solitons} 

Let us start considering circuits with motionless ultralocal solitons, namely with local gates $U$ fulfilling \eqref{eq:stillsoliton}. The first step is to simplify the latter condition. To this aim we note that, since $a$ and $b$ are hermitian, they can be written as 
\be
a= e^{- i \vec \alpha_1 \vec \sigma } \sigma_3\, e^{i \vec \alpha_1 \vec \sigma },\qquad\qquad b= e^{-i \vec \alpha_2 \vec \sigma } \sigma_3\, e^{i \vec \alpha_2 \vec \sigma }\,,
\label{eq:abherm}
\ee
where $\vec \sigma=(\sigma_1, \sigma_2, \sigma_3)$ is a vector of Pauli matrices and $\vec \alpha_{1},\vec \alpha_{2}$ vectors of angles. This means that the gauge transformation {(see \eqref{eq:gauge}) 
\be
U \mapsto   \left(e^{i \vec \alpha_1 \vec \sigma } \otimes e^{i \vec \alpha_2 \vec \sigma }\right) U  \left(e^{-i \vec \alpha_2 \vec \sigma } \otimes e^{-i \vec \alpha_1 \vec \sigma}\right)\,,
\label{eq:gauge1}
\ee
brings \eqref{eq:stillsoliton} in the following form 
\begin{empheq}[box=\fbox]{equation}
\label{eq:stillsigmaz}
\begin{cases} 
U^\dag (\sigma_3 \otimes\1) U =(-1)^{s_1} (\sigma_3 \otimes \1)\\
U^\dag (\1\otimes \sigma_3 ) U =(-1)^{s_2} (\1\otimes \sigma_3 )\\
\end{cases}
\qquad s_1,s_2\in\{0,1\}\,.
\end{empheq}
Using the Pauli algebra, it is easy to see that the most general $U\in {\rm U}(4)$ solving these relations can be written as (see Appendix~\ref{app:mostgenericgateswithsolitons})
\be
U= e^{i \phi} ((\sigma_1)^{s_1}\otimes(\sigma_1)^{s_2})  (e^{-i \eta \sigma_3}\otimes e^{-i \mu \sigma_3 })\cdot e^{- i J \sigma_3 \otimes \sigma_3}\qquad \phi,\eta, \mu\in[0,2\pi],\quad J\in[0,\pi/2]\,.
\label{eq:generallocalgatewithstillsolitons}
\ee
To find the implications of \eqref{eq:stillsigmaz} on the dynamics of other ultralocal operators, we consider 
\be
U^\dag ( \sigma_\beta\otimes \1) U\,, \qquad\qquad U^\dag (\1\otimes \sigma_\beta) U,\qquad\qquad \,\,\,\beta\in\{1,2\}\,.
\ee
Simple calculations lead to 
\bes
\begin{align}
U^\dag (\sigma_1 \otimes \1) U &= \cos 2J\, (e^{i2 \eta \sigma_3}\sigma_1)\otimes\1-\sin 2 J\, (e^{i2 \eta \sigma_3}\sigma_2) \otimes\sigma_3\\
(-1)^{s_1}  U^\dag (\sigma_2 \otimes \1) U &= \cos 2J\, (e^{i2 \eta \sigma_3}\sigma_2)\otimes\1+\sin 2 J\, (e^{i2 \eta \sigma_3}\sigma_1) \otimes\sigma_3\\ 
U^\dag (  \1\otimes \sigma_1) U &= \cos 2J\, \1\otimes (e^{i2 \mu \sigma_3}\sigma_1)-\sin 2 J\, \sigma_3\otimes (e^{i2 \mu \sigma_3}\sigma_2) \\
(-1)^{s_2}  U^\dag ( \1 \otimes \sigma_2) U &= \cos 2J\, \1 \otimes (e^{i2 \mu \sigma_3}\sigma_2)+\sin 2 J\,  \sigma_3 \otimes (e^{i2 \mu \sigma_3}\sigma_1) \,.
\end{align}
\esu
In essence, these relations mean that nothing can move in such a circuit. All ultralocal operators remain at their initial positions and the time evolution causes, at most, a rotation in the Pauli basis and the appearance of still solitons on their sides. An alternative way to pinpoint this ``localisation'' is to note that the Floquet operator $\mathbb U$ built using the gates \eqref{eq:generallocalgatewithstillsolitons} is essentially the matrix exponential of the classical Ising Hamiltonian. This is just a trivial instance of the ``l-bit'' Hamiltonian used for the phenomenological modelling of systems in the many-body localised regime~\cite{huse}.

\subsection{Chiral Solitons}

Let us now consider circuits with a single moving soliton, namely local gates $U$ fulfilling either \eqref{eq:rightmovingsoliton} or \eqref{eq:leftmovingsoliton}. We shall call such solitons {\em chiral solitons}, as they move in a fixed direction at the speed of light. Since the treatment is very similar in the two cases, we consider only one of them, say the second one. In other words we focus on \emph{left-moving solitons}. As before, we start by simplifying the problem using a gauge transformation. It is easy to see that, with the transformation 
\be
U \mapsto   \left(\1\otimes e^{i \vec \alpha_1 \vec \sigma } \right) U  \left(e^{-i \vec \alpha_1 \vec \sigma}\otimes \1 \right)\,,
\ee
the relation \eqref{eq:leftmovingsoliton} becomes 
\begin{empheq}[box=\fbox]{equation}
\label{eq:rightmovingsigmaz}
U^\dag (\1\otimes \sigma_3) U =(-1)^s (\sigma_3 \otimes  \1),\qquad\qquad s\in\{0,1\}\,,
\end{empheq}
where we again represented $a$ as in \eqref{eq:abherm}. As shown in  Appendix~\ref{app:mostgenericgateswithsolitons}, all possible ${U\in {\rm U}(4)}$ fulfilling \eqref{eq:rightmovingsigmaz} are parametrised as follows
\be
U= e^{i \phi} (u_+ \otimes(\sigma_1)^{s} e^{-i ({\eta_{-}}/{2})\sigma_3})\cdot V[J]\cdot (e^{-i ({\mu_{-}}/{2})\sigma_3}\otimes v_+),
\label{eq:mostgeneralchiralgate}
\ee
where $\phi,\eta_{-}, \mu_-\in[0,2\pi]$, $J\in[0,\pi/2]$, ${u_+,v_+\in {\rm SU}(2)}$ and we introduced 
\be
V[J]=e^{- i \frac{\pi}{4} \sigma_1 \otimes \sigma_1}e^{- i \frac{\pi}{4} \sigma_2 \otimes \sigma_2}e^{- i J \sigma_3 \otimes \sigma_3}\,.
\label{eq:Vgate}
\ee
Equation~\eqref{eq:mostgeneralchiralgate} is remarkable: it shows that all gates allowing for chiral solitons are \emph{dual-unitary}~\cite{BKP:dualunitary}. Namely, in addition to being unitary they fulfil
\be
\sum_{p,q=1,2} \braket{\ell \, q|U^\dag|k \, p}  \braket{j\, p|U|i\, q} = \delta_{\ell, i}\delta_{k, j}, \qquad \sum_{p,q=1,2} \braket{q\,\ell |U^\dag|p\,k}  \braket{p\, j|U|q\, i} = \delta_{\ell, i}\delta_{k, j},
\label{eq:dualunitarity}
\ee 
where ${\{\ket{i};\,i=1,2\}}$ is a real, orthonormal basis of $\mathbb C^2$.

As before, to find the implications of \eqref{eq:rightmovingsigmaz} on the dynamics of other ultralocal operators we consider 
\be
U^\dag ( \sigma_\alpha\otimes \1) U\,, \qquad\qquad U^\dag (\1\otimes \sigma_\beta) U,\qquad\qquad \alpha\in\{1,2,3\},\,\,\,\beta\in\{1,2\}\,.
\label{eq:ultralocaloperators}
\ee
Let us start with the first group. Expanding in the Pauli basis we have 
\be
\label{eq:expansionR}
U^\dag (\sigma_\gamma \otimes\1) U = \!\!\!\!\!\!\sum_{\alpha,\beta\in\{0,1,2,3\}} \!\!\!\!\!\!R^\gamma_{\alpha \beta}\,\,\, \sigma_\alpha\! \otimes \sigma_\beta\qquad\qquad\qquad \gamma\in\{1,2,3\},
\ee
where we used the convention $\sigma_0=\1$. Since $U^\dag (\sigma_\gamma \otimes\1) U$ are hermitian, traceless, and orthonormal we have 
\be
R^\gamma_{\alpha \beta}\in \mathbb R\,,\qquad\qquad R^\gamma_{0 0}=0\,,\qquad\qquad \sum_{\alpha,\beta\in\{0,1,2,3\}} \!\!\!\! R^{\gamma'}_{\alpha\beta} R^{\gamma''}_{\alpha\beta}=\delta_{\gamma'\gamma''}.\label{eq:normalization}
\ee
Therefore \eqref{eq:expansionR} in principle contains 14 free real parameters for each fixed $\gamma$. The number of these parameters, however, is reduced by the conditions \eqref{eq:rightmovingsigmaz}. Indeed, they imply  
\begin{align}
(\sigma_3 \otimes\1)\cdot U^\dag (\sigma_\gamma \otimes\1) U \cdot (\sigma_3 \otimes\1)&= U^\dag (\sigma_\gamma\otimes\1) U,\qquad\qquad \gamma\in\{1,2,3\},
\end{align}
meaning that the only non-zero coefficients can be  
\be
\{R^\gamma_{0\beta}, R^\gamma_{3\beta}\}_{{\beta,\gamma=1,2,3}}.
\label{eq:antiparallelsoliton}
\ee
Note that $R^\gamma_{30}=0$ because of the dual-unitarity property \eqref{eq:dualunitarity}. These coefficients can be expressed in terms of the parameters of the local gate as per Eq.~\eqref{eq:mostgeneralchiralgate}, however, the final expressions are cumbersome and not particularly instructive, so we decided not to report them. Expanding in the Pauli basis the second group of operators in  \eqref{eq:ultralocaloperators} we have 
\be
\label{eq:expansionL}
U^\dag (\1\otimes\sigma_\gamma) U = \!\!\!\!\!\!\sum_{\alpha,\beta\in\{0,1,2,3\}} \!\!\!\!\!\! L^\gamma_{\alpha \beta}\,\,\, \sigma_\alpha\! \otimes \sigma_\beta,\qquad\qquad \gamma\in\{1,2\}\,,
\ee
where, as above
\be
L^\gamma_{\alpha \beta}\in \mathbb R\,,\qquad\qquad L^\gamma_{0 0}=0\,,\qquad\qquad \sum_{\alpha,\beta\in\{0,1,2,3\}} \!\!\!\!\!\!L^{\gamma'}_{\alpha\beta} L^{\gamma''}_{\alpha\beta}=\delta_{\gamma'\gamma''}.\label{eq:normalization}
\ee
In this case the constraint is 
\be
(\sigma_3 \otimes\1)\cdot U^\dag (\1\otimes\sigma_\gamma) U \cdot (\sigma_3 \otimes\1)= - U^\dag (\1\otimes\sigma_\gamma) U,\qquad\qquad \gamma=1,2\,,
\ee
implying that the only non-zero coefficients are 
\be
\{L^\gamma_{1\alpha }, L^\gamma_{2\alpha }\}^{\gamma=1,2}_{{\alpha=0,1,2,3}}\,,
\label{eq:parallelsoliton}
\ee
once again, even though it is in principle possible, we do not express the coefficients in terms of the parameters of Eq.~\eqref{eq:mostgeneralchiralgate}. 

The constrains \eqref{eq:antiparallelsoliton} and \eqref{eq:parallelsoliton}  are relevant for ultralocal operators respectively initialised on integer and half-odd-integer sites. To see this let us  consider the schematic representation of the evolution of a ultralocal operator initially on an integer site $y$ (hence following \eqref{eq:antiparallelsoliton}) for a full time step 
\be\notag
\fullmoon \underset{y}{\newmoon} \fullmoon \fullmoon \overset{\Delta t=1/2}{\longmapsto} 
\fullmoon \underset{y}{\fullmoon} \newmoon \fullmoon + \fullmoon \underset{y}{\textcircled{z}} \newmoon \fullmoon \overset{\Delta t=1/2}{\longmapsto} \fullmoon \underset{y}{\fullmoon} \fullmoon \newmoon+\textcircled{z}\underset{y}{\fullmoon}\fullmoon \newmoon+ \fullmoon \underset{y}{\fullmoon} \textcircled{z}\newmoon + \textcircled{z} \underset{y}{\fullmoon} \textcircled{z} \newmoon,
\ee 
where $\newmoon$ represents a generic ultralocal operator, $\fullmoon$ represents the identity and $\textcircled{z}$ represents $\sigma^z$. In the subsequent evolution the soliton $\textcircled{z}$ will propagate to the left, while $\newmoon$ will continue to propagate to the right (both at the maximal speed). In this case the operator $\newmoon$ can be thought of as a sort of wave-front moving to the right, which, during each time step, can  ``shoot'' backwards a maximum of two solitons. The dynamic originated by \eqref{eq:parallelsoliton}, when $y$ is half-odd-integer, is instead completely different. In this case a schematic picture of the evolution looks like 
\be
\notag
\fullmoon \fullmoon \underset{y}{\newmoon} \fullmoon \overset{\Delta t=1/2}{\longmapsto} \fullmoon \newmoon \underset{y}{\newmoon}  \fullmoon\overset{\Delta t=1/2}{\longmapsto} \newmoon \newmoon \underset{y}{\fullmoon}\newmoon + \newmoon\newmoon\underset{y}{\textcircled{z}}\newmoon\,,
\ee 
therefore the evolution can be understood in terms of a ``left moving front'' ($\newmoon$), which shoots generic operators propagating to the right.

\subsection{Solitons in Both Directions}

Let us now consider local gates fulfilling both the conditions \eqref{eq:rightmovingsoliton} and \eqref{eq:leftmovingsoliton}. Namely, we consider circuits that have \emph{solitons propagating in both directions}
\be
\begin{cases} 
U^\dag (\1\otimes a) U =(-1)^{s_2} (a\otimes \1)\\
U^\dag (b\otimes\1) U =(-1)^{s_1} (\1\otimes b)\\
\end{cases}
\qquad s_1,s_2\in\{0,1\}\,.
\label{eq:twomodes}
\ee
Representing again $a$ and $b$ as in \eqref{eq:abherm}, we have that the gauge transformation
\be
U \mapsto   \left(e^{i \vec \alpha_2 \vec \sigma }\otimes e^{i \vec \alpha_1 \vec \sigma } \right) U  \left(e^{-i \vec \alpha_1 \vec \sigma}\otimes e^{-i \vec \alpha_2 \vec \sigma } \right)\,,
\ee
brings \eqref{eq:twomodes} to the form
\begin{empheq}[box=\fbox]{equation}
\label{eq:rightleftmovingsigmaz}
\begin{cases} 
U^\dag (\sigma_3 \otimes\1) U =(-1)^{s_1} (\1\otimes \sigma_3)\\
U^\dag (\1\otimes \sigma_3 ) U =(-1)^{s_2} (\sigma_3 \otimes \1)\\
\end{cases}
\qquad s_1,s_2\in\{0,1\}\,.
\end{empheq}
As shown in Appendix~\ref{app:mostgenericgateswithsolitons}, all possible ${U\in {\rm U}(4)}$ fulfilling these relations are parametrised as follows
\be
U= e^{i \phi} ((\sigma_1)^{s_1} e^{-i({\eta_{+}}/{2}) \sigma_3} \otimes(\sigma_1)^{s_2} e^{ -i ({\eta_{-}}/{2}) \sigma_3})\cdot V[J]\cdot (e^{-i ({\mu_{-}}/{2}) \sigma_3}\otimes e^{-i({\mu_{+}}/{2}) \sigma_3}),
\label{eq:mostgeneralnonchiralgate}
\ee
where $\phi,\eta_{\pm},\mu_\pm,J\in[0,2\pi]$ and $V[J]$ is defined in \eqref{eq:Vgate}. Clearly, being a special case of \eqref{eq:mostgeneralchiralgate}, \eqref{eq:mostgeneralnonchiralgate} is also \emph{dual-unitary}. Example of such gates are the dual-unitary parameter line of the trotterized XXZ chain~\cite{BKP:dualunitary, XXZtrot1, XXZtrot2}
\be
U_{XXZ}= V[J]\,,
\label{eq:XXZtrotterised}
\ee
and the self-dual kicked Ising model~\cite{BKP:kickedIsing, BKP:entropy, austen} at the non-interacting point\footnote{this form of the gate is equivalent to that reported in Ref.~\cite{BKP:dualunitary} up to a gauge transformation.}
\be
U_{\rm SDKI}= e^{i \frac{\pi}{4} \sigma_3}\otimes e^{i \frac{\pi}{4} \sigma_3} \cdot V[0]\,.
\label{eq:SDKI}
\ee
The form \eqref{eq:mostgeneralnonchiralgate} of the gate implies that in the expansions \eqref{eq:expansionR} and \eqref{eq:expansionL} the only non-vanishing coefficients are  
\be
\{R^\gamma_{\alpha\beta}\}^{\gamma=1,2}_{{\alpha=0,3; \beta=1,2}},\,\,\{L^\gamma_{\alpha\beta}\}^{\gamma=1,2}_{{\alpha=1,2; \beta=0,3}}\,.
\label{eq:constraintstwomodes}
\ee
In particular, expressing them in terms of the parameters of the gate we find   
\bes
\label{eq:xandyrelationsright}
\begin{align}
U^\dag (\sigma_1 \otimes \1) U &= \sin 2J\,\, \1\otimes (e^{i (\eta_{+}+\mu_{+}) \sigma_3}\sigma_1)+\cos 2J\,\, \sigma_3\! \otimes (e^{i(\eta_{+}+\mu_{+}) \sigma_3}\sigma_2),\label{eq:xandyrelationsright1} \\
(-1)^{s_1} U^\dag (\sigma_2 \otimes \1) U &=  \sin 2J\,\, \1\otimes (e^{i (\eta_{+}+\mu_{+}) \sigma_3}\sigma_2)-\cos 2J\,\, \sigma_3\! \otimes (e^{i(\eta_{+}+\mu_{+}) \sigma_3}\sigma_1).\label{eq:xandyrelationsright2} 
\end{align}
\esu
Proceeding analogously we have 
\bes
\label{eq:xandyrelationsleft}
\begin{align}
U^\dag (\1\otimes\sigma_{1}) U &= \sin 2J\,\, (e^{i (\eta_{-}+\mu_{-}) \sigma_3}\sigma_1)\otimes\1+\cos 2J\,\, (e^{i(\eta_{-}+\mu_{-}) \sigma_3}\sigma_2)\otimes\sigma_3,\label{eq:xandyrelationsleft1} \\
(-1)^{s_2} U^\dag (\1\otimes\sigma_2) U &=  \sin 2J\,\, (e^{i (\eta_{-}+\mu_{-}) \sigma_3}\sigma_2)\otimes\1-\cos 2J\,\, (e^{i(\eta_{-}+\mu_{-}) \sigma_3}\sigma_1)\otimes\sigma_3.\label{eq:xandyrelationsleft2} 
\end{align}
\esu
As expected, the ``front picture" described above for the spreading of operators on integer sites, now holds also for those on half-odd-integer sites, i.e. operators on integer (half-odd-integer) sites create a front moving to the right (left) at the maximal speed and shooting solitons backward.     

\subsection{Coexistent Still and Moving Solitons: No Go}
One could in principle envisage circuits featuring coexistent motionless and propagating solitons. Here we show that this is, however, impossible. We proceed by \emph{reductio ad absurdum} and assume that a still and at least one of the
moving solitons are both present, specifically
\begin{align} 
&U^\dag (a\otimes\1) U =\pm (b\otimes \1),\\
&U^\dag (\1\otimes b) U =\pm (\1\otimes a),\label{eq:incb}\\
&U^\dag (\1\otimes c) U =\pm (c\otimes\1)\,.\label{eq:incc}
\end{align}
with $a,b,c\in{\rm End}(\mathbb C^2)$ hermitian and traceless. Considering the commutator of $U^\dag (\1\otimes b) U$ and $U^\dag (\1\otimes c) U$ and using the last two relations we have 
\be
U^\dag (\1\otimes [b,c]) U=0\quad\Rightarrow \quad [b,c]=0\,.
\ee
Using that $b$ and $c$ are traceless, hermitian operators in ${\rm End}(\mathbb C^2)$ we find  
\be
b=\pm c \,.
\ee
This, however, implies that \eqref{eq:incb} and \eqref{eq:incc} are compatible only if $c=\pm \1$, which contradicts the tracelessness condition. 

}

%%%%%%%%%%%%%%%%%%
\section{Local-Operator Entanglement}%
\label{sec:OE}%%%%%%%%%%%
%%%%%%%%%%%%%%%%%%

The entanglement of a time-evolving operator is defined as the entanglement of the state corresponding to it under a state-to-operator mapping~\cite{ADM:OperatorEntanglement, Zanardi, PP:OEIsing, PP:OEXY, DMRGHeisenberg, SimulationsOE, ZPP:DensityMatrixOE, OECFT, nahum18-2}. In particular, here we use the ``folding mapping'' defined in Section 2.1 of Paper I and we consider the entanglement of the (normalised to one) pure state   
\be
\ket{a_y(t)},
\ee
corresponding to the operator $a_y(t)$ via the mapping. In the graphical representation of Paper I this is depicted as 
\begin{equation}
\label{eq:graphrep}
\ket{a_y(t)} = 
\begin{tikzpicture}[baseline=(current  bounding  box.center), scale=0.55]
\foreach \i in {0,...,4}
{
\draw[very thick] (-.5-2*\i,1) -- (0.5-2*\i,0);
\draw[very thick] (-0.5-2*\i,0) -- (0.5-2*\i,1);
\draw[ thick, fill=myviolet, rounded corners=2pt] (-0.25-2*\i,0.25) rectangle (.25-2*\i,0.75);
\draw[thick] (-2*\i,0.65) -- (.15-2*\i,.65) -- (.15-2*\i,0.5);
}
\foreach \i in {1,...,5}
{
\draw[very thick] (.5-2*\i,6) -- (1-2*\i,5.5);
\draw[very thick] (1.5-2*\i,6) -- (1-2*\i,5.5);
}
\foreach \jj[evaluate=\jj as \j using -2*(ceil(\jj/2)-\jj/2)] in {0,...,3}
\foreach \i in {1,...,5}
{
\draw[very thick] (.5-2*\i-1*\j,2+1*\jj) -- (1-2*\i-1*\j,1.5+\jj);
\draw[very thick] (1-2*\i-1*\j,1.5+1*\jj) -- (1.5-2*\i-1*\j,2+\jj);
}
\foreach \jj[evaluate=\jj as \j using -2*(ceil(\jj/2)-\jj/2)] in {0,...,4}
\foreach \i in {1,...,5}
{
\draw[very thick] (.5-2*\i-1*\j,1+1*\jj) -- (1-2*\i-1*\j,1.5+\jj);
\draw[very thick] (1-2*\i-1*\j,1.5+1*\jj) -- (1.5-2*\i-1*\j,1+\jj);
\draw[ thick, fill=myviolet, rounded corners=2pt] (0.75-2*\i-1*\j,1.75+\jj) rectangle (1.25-2*\i-1*\j,1.25+\jj);
\draw[thick] (1-2*\i-1*\j,1.65+1*\jj) -- (1.15-2*\i-1*\j,1.65+1*\jj) -- (1.15-2*\i-1*\j,1.5+1*\jj);
}
\def\eps{-0.5};
\draw[very thick] (-8.5,\eps) -- (-8.5,0); 
\draw[very thick] (-7.5,\eps) -- (-7.5,0);
\draw[very thick] (-6.5,\eps) -- (-6.5,0); 
\draw[very thick] (-5.5,\eps) -- (-5.5,0); 
\draw[very thick] (-4.5,\eps) -- (-4.5,0);
\draw[very thick] (-3.5,\eps) -- (-3.5,0);
\draw[very thick] (-2.5,\eps) -- (-2.5,0);
\draw[very thick] (-1.5,\eps) -- (-1.5,0);
\draw[very thick] (-.5,\eps) -- (-.5,0);
\draw[very thick] (.5,\eps) -- (.5,0);
\draw[ thick, fill=white] (-8.5,\eps) circle (0.1cm); 
\draw[ thick, fill=white] (-7.5,\eps) circle (0.1cm);
\draw[ thick, fill=black] (-6.5,\eps) circle (0.1cm); 
\draw[ thick, fill=white] (-5.5,\eps) circle (0.1cm); 
\draw[ thick, fill=white] (-4.5,\eps) circle (0.1cm); 
\draw[ thick, fill=white] (-3.5,\eps) circle (0.1cm); 
\draw[ thick, fill=white] (-2.5,\eps) circle (0.1cm); 
\draw[ thick, fill=white] (-1.5,\eps) circle (0.1cm);
\draw[ thick, fill=white] (-0.5,\eps) circle (0.1cm); 
\draw[ thick, fill=white] (.5,\eps) circle (0.1cm); 
\Text[x=-6.5,y=-0.5+\eps]{$a$}
\Text[x=1.75,y=2.5]{,}
\draw [thick, decorate, decoration={brace,amplitude=5pt, raise=4pt},yshift=0pt]
(-9.5,5.85) -- (-6.5,5.85) node [black,midway,yshift=0.25cm, anchor = south] {$y$};
\draw [thick, decorate, decoration={brace,amplitude=5pt, raise=4pt},yshift=0pt]
(0.5,5.85) -- (0.5,0) node [black,midway,xshift=0.5cm, yshift=-0.25cm, anchor = south] {$t$};
\end{tikzpicture}
\end{equation}
where each wire carries a local Hilbert space $\mathbb C^2 \otimes \mathbb C^2$
\begin{equation}
\label{eq:thickwire}
\begin{tikzpicture}[baseline=(current  bounding  box.center), scale=0.8]
\def\eps{0.5};
\draw[very thick] (-3.25,0.5) -- (-3.25,-0.5);
\draw[ thick] (-2.2,0.5) -- (-2.2,-0.5);
\draw[ thick] (-1.75,0.7) -- (-1.75,-0.3);
\Text[x=-2.75,y=0.05]{$=$}
\end{tikzpicture}
\end{equation}
and we introduced the ``double gate"
\begin{equation}
\label{eq:doublegate}
\begin{tikzpicture}[baseline=(current  bounding  box.center), scale=1]
\def\eps{0.5};
\draw[very thick] (-4.25,0.5) -- (-3.25,-0.5);
\draw[very thick] (-4.25,-0.5) -- (-3.25,0.5);
\draw[ thick, fill=myviolet, rounded corners=2pt] (-4,0.25) rectangle (-3.5,-0.25);
\draw[thick] (-3.75,0.15) -- (-3.6,0.15) -- (-3.6,0);
\Text[x=-2.75,y=0.0, anchor = center]{$=$}
\draw[thick] (-1.65,0.65) -- (-0.65,-0.35);
\draw[thick] (-1.65,-0.35) -- (-0.65,0.65);
\draw[ thick, fill=myred, rounded corners=2pt] (-1.4,0.4) rectangle (-.9,-0.1);
\draw[thick] (-1.15,0) -- (-1,0) -- (-1,0.15);
\draw[thick] (-2.25,0.5) -- (-1.25,-0.5);
\draw[thick] (-2.25,-0.5) -- (-1.25,0.5);
\draw[ thick, fill=myblue, rounded corners=2pt] (-2,0.25) rectangle (-1.5,-0.25);
\draw[thick] (-1.75,0.15) -- (-1.6,0.15) -- (-1.6,0);
\Text[x=-4.8,y=0.05]{$W=$}
\Text[x=-.4,y=0.05]{.}
\end{tikzpicture}
\end{equation}
Here 
\begin{equation}
\label{eq:graphical}
\begin{tikzpicture}[baseline=(current  bounding  box.center), scale=1]
\def\eps{0.5};
\draw[thick] (-4.25,0.5) -- (-3.25,-0.5);
\draw[thick] (-4.25,-0.5) -- (-3.25,0.5);
\draw[thick, fill=myblue, rounded corners=2pt] (-4,0.25) rectangle (-3.5,-0.25);
\draw[thick] (-3.75,0.15) -- (-3.6,0.15) -- (-3.6,0.15) -- (-3.6,0);%
\Text[x=-5.25,y=0.05]{$U^\dag=$}
\Text[x=-3.,y=-0.075]{,}
\end{tikzpicture}
\end{equation}
and the upside down red gate means that $U$ is transposed. Finally
\begin{equation}
\label{eq:stateid}
\begin{tikzpicture}[baseline=(current  bounding  box.center), scale=0.8]
\def\eps{0.5};
\draw[very thick] (-0.15,0.25) -- (-0.15,-0.251);
\draw[thick, fill=white] (-.15,-0.25) circle (0.1cm); 
\Text[x=1,y=-0.25,  anchor = south]{$\equiv\ket{\circ}$,}
\Text[x=-.15,y=-0.8, anchor = south]{}
\end{tikzpicture}
\qquad\qquad
\begin{tikzpicture}[baseline=(current  bounding  box.center), scale=0.8]
\def\eps{0.5};
\draw[very thick] (-0.15,0.25) -- (-0.15,-0.251);
\draw[thick, fill=black] (-.15,-0.25) circle (0.1cm); 
\Text[x=1,y=-0.25,  anchor = south]{$\equiv\ket{a}$\,,}
\Text[x=-.15,y=-0.8, anchor = south]{$a$}
\end{tikzpicture}
\end{equation}
where $\ket{\circ}$ corresponds to the identity operator and $\ket{a}$ to a generic ultra-local operator $a$ (both states are normalised to one). Note that in \eqref{eq:graphrep} open wires at the sides should be contracted via periodic boundary conditions, while vertical open ends at the top represent a state vector (ket) in $\mathbb C^{4^{2L}}$.

From the unitarity of $U$ it follows 
\bes
\begin{align}
&\,\,\,\begin{tikzpicture}[baseline=(current  bounding  box.center), scale=1]
\def\eps{0.5};
\draw[very thick] (-4.25,0.5) -- (-3.25,-0.5);
\draw[very thick] (-4.25,-0.5) -- (-3.25,0.5);
\draw[ thick, fill=myviolet, rounded corners=2pt] (-4,0.25) rectangle (-3.5,-0.25);
\draw[thick] (-3.75,0.15) -- (-3.6,0.15) -- (-3.6,0);
\Text[x=-2.75,y=0.0, anchor = center]{$=$}
\draw[thick, fill=white] (-4.25,-0.5) circle (0.1cm); 
\draw[thick, fill=white] (-3.25,-0.5) circle (0.1cm); 
\draw[very thick] (-2.25,0.5) -- (-2.25,-0.5);
\draw[very thick] (-1.25,-0.5) -- (-1.25,0.5);
\draw[thick, fill=white] (-2.25,-0.5) circle (0.1cm); 
\draw[thick, fill=white] (-1.25,-0.5) circle (0.1cm); 
\Text[x=-1,y=0.0, anchor = center]{,}
\end{tikzpicture} \label{eq:unitarybullets}
&
&\,\,\,\begin{tikzpicture}[baseline=(current  bounding  box.center), scale=1]
\def\eps{0.5};
\draw[very thick] (-4.25,0.5) -- (-3.25,-0.5);
\draw[very thick] (-4.25,-0.5) -- (-3.25,0.5);
\draw[ thick, fill=myviolet, rounded corners=2pt] (-4,0.25) rectangle (-3.5,-0.25);
\draw[thick] (-3.75,0.15) -- (-3.6,0.15) -- (-3.6,0);
\Text[x=-2.75,y=0.0, anchor = center]{$=$}
\draw[thick, fill=white] (-4.25,0.5) circle (0.1cm); 
\draw[thick, fill=white] (-3.25,0.5) circle (0.1cm); 
\draw[very thick] (-2.25,0.5) -- (-2.25,-0.5);
\draw[very thick] (-1.25,-0.5) -- (-1.25,0.5);
\draw[thick, fill=white] (-2.25,0.5) circle (0.1cm); 
\draw[thick, fill=white] (-1.25,0.5) circle (0.1cm); 
\Text[x=-1,y=0.0, anchor = center]{.}
\end{tikzpicture}\\
\notag\\
&\begin{tikzpicture}[baseline=(current  bounding  box.center), scale=1]
\def\ep{1.3};
\def\eps{0.8};
\draw[very thick] (-2.25,1) -- (-1.75,0.5);
\draw[very thick] (-1.75,0.5) -- (-1.25,1);
\draw[very thick] (-2.25,-1) -- (-1.75,-0.5);
\draw[very thick] (-1.75,-0.5) -- (-1.25,-1);
\draw[very thick] (-1.9,0.35) to[out=170, in=-170] (-1.9,-0.35);
\draw[very thick] (-1.6,0.35) to[out=10, in=-10] (-1.6,-0.35);
\draw[thick, fill=myviolet, rounded corners=2pt] (-2,0.25) rectangle (-1.5,0.75);
\draw[thick, fill=myvioletc, rounded corners=2pt] (-2,-0.25) rectangle (-1.5,-0.75);
\draw[thick] (-1.75,0.65) -- (-1.6,0.65) -- (-1.6,0.5);
\draw[thick] (-1.75,-0.35) -- (-1.6,-0.35) -- (-1.6,-0.5);
\Text[x=-1.25,y=-0.5-\ep]{}
\Text[x=-0.4,y=0]{$=$}
\draw[very thick] (.7,-.75) -- (.7,.75) (1.3,-0.75) -- (1.3,0.75) (.7,-.75) -- (.45,.-1) (1.3,-.75) -- (1.55,.-1) (.7,.75) -- (.45,1) (1.3,.75) -- (1.55,1);
\Text[x=1.6,y=0]{,}
\end{tikzpicture}
& 
&\begin{tikzpicture}[baseline=(current  bounding  box.center), scale=1]
\def\ep{1.3};
\def\eps{0.8};
\draw[very thick] (-2.25,1) -- (-1.75,0.5);
\draw[very thick] (-1.75,0.5) -- (-1.25,1);
\draw[very thick] (-2.25,-1) -- (-1.75,-0.5);
\draw[very thick] (-1.75,-0.5) -- (-1.25,-1);
\draw[very thick] (-1.9,0.35) to[out=170, in=-170] (-1.9,-0.35);
\draw[very thick] (-1.6,0.35) to[out=10, in=-10] (-1.6,-0.35);
\draw[thick, fill=myvioletc, rounded corners=2pt] (-2,0.25) rectangle (-1.5,0.75);
\draw[thick, fill=myviolet, rounded corners=2pt] (-2,-0.25) rectangle (-1.5,-0.75);
\draw[thick] (-1.75,0.65) -- (-1.6,0.65) -- (-1.6,0.5);
\draw[thick] (-1.75,-0.35) -- (-1.6,-0.35) -- (-1.6,-0.5);
\Text[x=-1.25,y=-0.5-\ep]{}
\Text[x=-0.4,y=0]{$=$}
\draw[very thick] (.7,-.75) -- (.7,.75) (1.3,-0.75) -- (1.3,0.75) (.7,-.75) -- (.45,.-1) (1.3,-.75) -- (1.55,.-1) (.7,.75) -- (.45,1) (1.3,.75) -- (1.55,1);
\Text[x=1.6,y=0]{,}
\end{tikzpicture}
\end{align}
\label{eq:unitarity}
\esu
where we introduced 
\begin{equation}
\label{eq:doublegate}
\begin{tikzpicture}[baseline=(current  bounding  box.center), scale=1]
\def\eps{0.5};
\draw[very thick] (-4.25,0.5) -- (-3.25,-0.5);
\draw[very thick] (-4.25,-0.5) -- (-3.25,0.5);
\draw[ thick, fill=myvioletc, rounded corners=2pt] (-4,0.25) rectangle (-3.5,-0.25);
\draw[thick] (-3.75,.15) -- (-3.6,.15) -- (-3.6,0);
\Text[x=-2.8,y=0.05]{$=W^\dag.$}
\end{tikzpicture}
\end{equation} 
Using the first of \eqref{eq:unitarybullets} we see that $\ket{a_y(t)}$ can be simplified out of a lightcone spreading from $y$ at speed 1.

As it is customary, we measure the entanglement of the connected real space region ${A=[-L,0]}$ with respect to the rest using the entanglement entropies
\be
S^{(n)}(y, t)=\frac{1}{1-n}\log{\rm tr}_{A}[\rho_A(t,y;a)^n],\qquad\qquad n=1,2,\ldots,
\label{eq:Renyin}
\ee
where we introduced the reduced density matrix 
\be
\rho_A(t,y;a)={\rm tr}_{\bar A}[\ket{a_y(t)}\!\!\bra{a_y(t)}
]\!\!=
\begin{tikzpicture}[baseline=(current  bounding  box.center), scale=0.55]
\def\eps{-0.5};
\draw[very thick]  (-11.5,5.98) to[out=-270, in=90] (-12,6);
\draw[very thick]  (-10.5,5.98) to[out=-270, in=90] (-12.5,6);
\draw[very thick]  (-9.5,5.98) to[out=-270, in=90] (-13,6);
\draw[very thick]  (-8.5,5.98) to[out=-270, in=90] (-13.5,6);
\draw[very thick]  (-11.5,-5.98+\eps) to[out=270, in=-90] (-12,-6+\eps);
\draw[very thick]  (-10.5,-5.98+\eps) to[out=270, in=-90] (-12.5,-6+\eps);
\draw[very thick]  (-9.5,-5.98+\eps) to[out=270, in=-90] (-13,-6+\eps);
\draw[very thick]  (-8.5,-5.98+\eps) to[out=270, in=-90] (-13.5,-6+\eps);
\draw[very thick]  (-12,6) -- (-12,-6+\eps);
\draw[very thick]  (-12.5,6) -- (-12.5,-6+\eps);
\draw[very thick]  (-13,6) -- (-13,-6+\eps);
\draw[very thick]  (-13.5,6) -- (-13.5,-6+\eps);

\draw[very thick] (-.5-6,1) -- (-6,0.5);
\draw[very thick] (-6,0.5) -- (0.5-6,1);
\draw[very thick] (-6,0.5) -- (0.5-6,0);
\draw[very thick] (-6,0.5) -- (-0.5-6,0);
\draw[thick, fill=black] (-6.5,0) circle (0.1cm); 
\draw[thick, fill=white] (-5.5,0) circle (0.1cm); 
\draw[thick, fill=white] (-4.5,1) circle (0.1cm); 
\draw[thick, fill=myviolet, rounded corners=2pt] (-0.25-6,0.25) rectangle (.25-6,0.75);
\draw[thick] (-6,.65) -- (-5.85,.65) -- (-5.85,.5);
\foreach \jj[evaluate=\jj as \j using 2*(ceil(\jj/2)-\jj/2), evaluate=\jj as \aa using int(3-\jj/2), evaluate=\jj as \bb using int(4+\jj/2)] in {0,...,4}
\foreach \i in {\aa,...,\bb}
{
\draw[very thick] (.5-2*\i-1*\j,2+1*\jj) -- (1-2*\i-1*\j,1.5+\jj);
\draw[very thick] (1-2*\i-1*\j,1.5+1*\jj) -- (1.5-2*\i-1*\j,2+\jj);
\draw[very thick] (.5-2*\i-1*\j,1+1*\jj) -- (1-2*\i-1*\j,1.5+\jj);
\draw[very thick] (1-2*\i-1*\j,1.5+1*\jj) -- (1.5-2*\i-1*\j,1+\jj);
\draw[thick, fill=myviolet, rounded corners=2pt] (0.75-2*\i-1*\j,1.75+\jj) rectangle (1.25-2*\i-1*\j,1.25+\jj);
\draw[thick] (1-2*\i-1*\j,1.65+1*\jj) -- (1.15-2*\i-1*\j,1.65+1*\jj) -- (1.15-2*\i-1*\j,1.5+1*\jj);
}
\foreach \j in {1,...,5}{
\draw[thick, fill=white] (-5.5+\j,\j) circle (0.1cm);
\draw[thick, fill=white] (-6.5-\j,\j) circle (0.1cm);
}
%conjugate
\draw[very thick] (-.5-6,-1+\eps) -- (-6,-0.5+\eps);
\draw[very thick] (-6,-0.5+\eps) -- (0.5-6,-1+\eps);
\draw[very thick] (-6,-0.5+\eps) -- (0.5-6,0+\eps);
\draw[very thick] (-6,-0.5+\eps) -- (-0.5-6,0+\eps);
\draw[thick, fill=black] (-6.5,0+\eps) circle (0.1cm); 
\draw[thick, fill=white] (-5.5,0+\eps) circle (0.1cm); 
\draw[thick, fill=white] (-4.5,-1+\eps) circle (0.1cm); 
\draw[thick, fill=myvioletc, rounded corners=2pt] (-0.25-6,-0.25+\eps) rectangle (.25-6,-0.75+\eps);
\draw[thick] (-6,-.35+\eps) -- (-5.85,-.35+\eps) -- (-5.85,-.5+\eps);
\foreach \jj[evaluate=\jj as \j using 2*(ceil(\jj/2)-\jj/2), evaluate=\jj as \aa using int(3-\jj/2), evaluate=\jj as \bb using int(4+\jj/2)] in {0,...,4}
\foreach \i in {\aa,...,\bb}
{
\draw[very thick] (.5-2*\i-1*\j,-2-1*\jj+\eps) -- (1-2*\i-1*\j,-1.5-\jj+\eps);
\draw[very thick] (1-2*\i-1*\j,-1.5-1*\jj+\eps) -- (1.5-2*\i-1*\j,-2-\jj+\eps);
\draw[very thick] (.5-2*\i-1*\j,-1-1*\jj+\eps) -- (1-2*\i-1*\j,-1.5-\jj+\eps);
\draw[very thick] (1-2*\i-1*\j,-1.5-1*\jj+\eps) -- (1.5-2*\i-1*\j,-1-\jj+\eps);
\draw[thick, fill=myvioletc, rounded corners=2pt] (0.75-2*\i-1*\j,-1.75-\jj+\eps) rectangle (1.25-2*\i-1*\j,-1.25-\jj+\eps);
\draw[thick] (1-2*\i-1*\j,-1.35-1*\jj+\eps) -- (1.15-2*\i-1*\j,-1.35-1*\jj+\eps) -- (1.15-2*\i-1*\j,-1.5-1*\jj+\eps);
}
\foreach \j in {1,...,5}{
\draw[thick, fill=white] (-5.5+\j,-\j+\eps) circle (0.1cm);
\draw[thick, fill=white] (-6.5-\j,-\j+\eps) circle (0.1cm);
\Text[x=-7,y=0]{$a$}
\Text[x=-6.9,y=-0.5]{$a^\dag$}
}
\end{tikzpicture}.
\label{eq:densitymatrix}
\ee
Here we took $y< t \leq L$. As discussed in Paper I, the entanglement vanishes when the first inequality is violated, while the second inequality is chosen to be in the thermodynamic-limit configuration.

The Renyi entropies \eqref{eq:Renyin} are conveniently rewritten in terms of a ``corner transfer matrix"~\cite{Baxterbook, Baxterrev} $\mathcal C[a]$ by noting (see Paper I) 
\be
{\rm tr}_{A}[\rho_A(t,y;a)^n] = {\rm tr}[ (\mathcal C[a]^\dag \mathcal C[a])^n ]\,.
\label{eq:traceidentityCC}
\ee
For integer $y$, the matrix elements of the corner transfer matrix read as 
\be
\braket{\alpha_1\ldots \alpha_{x_-}|\mathcal C[a]|\beta_1\ldots \beta_{x_+}}=
\begin{tikzpicture}[baseline=(current  bounding  box.center), scale=0.65]
\def\eps{-0.5};
\foreach \i in {0,...,2}
\foreach \j in {1,...,4}{
\draw[very thick] (-4.5+1.5*\j,1.5*\i) -- (-3+1.5*\j,1.5*\i);
\draw[very thick] (-3.75+1.5*\j,1.5*\i-.75) -- (-3.75+1.5*\j,1.5*\i+.75);
\draw[ thick, fill=myviolet, rounded corners=2pt, rotate around ={45: (-3.75+1.5*\j,1.5*\i)}] (-4+1.5*\j,0.25+1.5*\i) rectangle (-3.5+1.5*\j,-0.25+1.5*\i);
\draw[thick, rotate around ={45: (-3.75+1.5*\j,1.5*\i)}] (-3.75+1.5*\j,.15+1.5*\i) -- (-3.6+1.5*\j,.15+1.5*\i) -- (-3.6+1.5*\j,1.5*\i);}
\Text[x=3.3,y=1.55]{,}
\foreach \i in {0,...,2}{
\draw[thick, fill=white] (3,1.5*\i) circle (0.1cm);
}
\foreach \i in {1,...,3}{
\draw[thick, fill=white] (-3.75+1.5*\i,-0.75) circle (0.1cm);
}
\draw[thick, fill=black] (2.25,-.75) circle (0.1cm);
\Text[x=2.25,y=-1.15]{$a$}
\foreach \i in {0,...,2}{
\draw[thick, fill=black] (-3,1.5*\i) circle (0.1cm);
}
\foreach \i in {1,...,4}{
\draw[thick, fill=black] (-3.75+1.5*\i,-0.75+4.5) circle (0.1cm);
}
\Text[x=-3.75,y=3]{$\alpha_{x_-}$};
\Text[x=-3.75,y=1.75]{$\vdots$};
\Text[x=-3.75,y=0]{$\alpha_1$};
\Text[x=-3.75+1.5,y=-0.75+5]{$\beta_{x_+}$};
\Text[x=-3.75+2.5*1.5,y=-0.75+5]{$\ldots$};
\Text[x=-3.75+4*1.5,y=-0.75+5]{$\beta_{1}$};
\end{tikzpicture}
\qquad
\alpha_j,\beta_j\in\{0,\ldots,3\}\,,
\ee 
where $\{\ket{\alpha}, \alpha=0,\ldots,3\}$ is the basis of $\mathbb C^2 \otimes \mathbb C^2$ with  
\be
 \ket{0}=\ket{\circ},\qquad \ket{1}=\ket{\sigma_1},\qquad \ket{2}=\ket{\sigma_2},\qquad \ket{3}=\ket{\sigma_3},
\ee
and we introduced the ``lightcone" coordinates 
\be
x_+ \equiv t+y\,,\qquad\qquad x_-\equiv t-y\,. 
\ee 
The hermitian matrices $\mathcal C[a]^\dag \mathcal C[a]$ and $\mathcal C[a] \mathcal C[a]^{{\dag}}$ are represented in terms of the two ``row transfer matrices" 
\be
\label{eq:horizontalT}
\mathcal H_{x_+}[b] =
\begin{tikzpicture}[baseline=(current  bounding  box.center), scale=0.75]
\foreach \j in {1,2}{
\draw[very thick, rotate around ={45: (-6+1.5*\j,0.5)}] (-.5-6+1.5*\j,0) -- (0.5-6+1.5*\j,1);
\draw[very thick, rotate around ={45: (-6+1.5*\j,0.5)}] (-5.4+1.5*\j,-0.1) -- (-6.6+1.5*\j,1.1);
\draw[thick, fill=myvioletc, rounded corners=2pt, rotate around ={45: (-6+1.5*\j,0.5)}] (-0.25-6+1.5*\j,0.25) rectangle (.25-6+1.5*\j,0.75);
\draw[thick, rotate around ={135: (-6+1.5*\j,0.5)}] (-6+1.5*\j,0.65) -- (-5.85+1.5*\j,0.65) -- (-5.85+1.5*\j,0.5);
}
\foreach \j in {4,5}{
\draw[very thick, rotate around ={45: (-6+1.5*\j,0.5)}] (-.5-6+1.5*\j,0) -- (0.5-6+1.5*\j,1);
\draw[very thick, rotate around ={45: (-6+1.5*\j,0.5)}] (-5.4+1.5*\j,-0.1) -- (-6.6+1.5*\j,1.1);
\draw[thick, fill=myvioletc, rounded corners=2pt, rotate around ={45: (-6+1.5*\j,0.5)}] (-0.25-6+1.5*\j,0.25) rectangle (.25-6+1.5*\j,0.75);
\draw[thick, rotate around ={135: (-6+1.5*\j,0.5)}] (-6+1.5*\j,0.65) -- (-5.85+1.5*\j,0.65) -- (-5.85+1.5*\j,0.5);
}
\foreach \j in {1,...,2}{
\draw[very thick, rotate around ={45: (1.5+1.5*\j,0.5)}] (1+1.5*\j,0) -- (2+1.5*\j,1);
\draw[very thick, rotate around ={45: (1.5+1.5*\j,0.5)}] (2.1+1.5*\j,-0.1) -- (0.9+1.5*\j,1.1);
\draw[thick, fill=myviolet, rounded corners=2pt, rotate around ={45: (1.5+1.5*\j,0.5)}] (1.25+1.5*\j,0.25) rectangle (1.75+1.5*\j,0.75);
\draw[thick, rotate around ={45: (1.5+1.5*\j,0.5)}] (1.5+1.5*\j,0.65) -- (1.65+1.5*\j,0.65) -- (1.65+1.5*\j,0.5);
}
\foreach \j in {4,...,5}{
\draw[very thick, rotate around ={45: (1.5+1.5*\j,0.5)}] (1+1.5*\j,0) -- (2+1.5*\j,1);
\draw[very thick, rotate around ={45: (1.5+1.5*\j,0.5)}] (2.1+1.5*\j,-0.1) -- (0.9+1.5*\j,1.1);
\draw[thick, fill=myviolet, rounded corners=2pt, rotate around ={45: (1.5+1.5*\j,0.5)}] (1.25+1.5*\j,0.25) rectangle (1.75+1.5*\j,0.75);
\draw[thick, rotate around ={45: (1.5+1.5*\j,0.5)}] (1.5+1.5*\j,0.65) -- (1.65+1.5*\j,0.65) -- (1.65+1.5*\j,0.5);
}
\draw[thick, fill=black] (-5.25,0.5) circle (0.1cm);
\draw[thick, fill=black] (9.75,0.5) circle (0.1cm);
\Text[x=-5.65, y=0.5]{$b^\dag$}
\Text[x=10.15, y=0.5]{$b$}
\Text[x=-1.55,y=0.5]{$\cdots$}
\Text[x=6,y=0.5]{$\cdots$}
\draw [thick, decorate, decoration={brace,amplitude=8pt, raise=4pt, mirror},yshift=0pt]
(-4.75,-0.1) -- (9.25,-0.1) node [black,midway,yshift=-0.95cm, anchor = south] {$2x_+$};
\Text[x=0,y=2]{}
\Text[x=10.45,y=0.5]{,}
\end{tikzpicture}
\ee
\vspace{-1.0cm}
\be
\mathcal V_{x_-}[b] = \begin{tikzpicture}[baseline=(current  bounding  box.center), scale=0.75]
\foreach \j in {1,2}{
\draw[very thick, rotate around ={45: (-6+1.5*\j,0.5)}] (-.5-6+1.5*\j,0) -- (0.5-6+1.5*\j,1);
\draw[very thick, rotate around ={45: (-6+1.5*\j,0.5)}] (-5.4+1.5*\j,-0.1) -- (-6.6+1.5*\j,1.1);
\draw[thick, fill=myviolet, rounded corners=2pt, rotate around ={45: (-6+1.5*\j,0.5)}] (-0.25-6+1.5*\j,0.25) rectangle (.25-6+1.5*\j,0.75);
\draw[thick, rotate around ={-45: (-6+1.5*\j,0.5)}] (-6+1.5*\j,0.65) -- (-5.85+1.5*\j,0.65) -- (-5.85+1.5*\j,0.5);
}
\foreach \j in {4,5}{
\draw[very thick, rotate around ={45: (-6+1.5*\j,0.5)}] (-.5-6+1.5*\j,0) -- (0.5-6+1.5*\j,1);
\draw[very thick, rotate around ={45: (-6+1.5*\j,0.5)}] (-5.4+1.5*\j,-0.1) -- (-6.6+1.5*\j,1.1);
\draw[thick, fill=myviolet, rounded corners=2pt, rotate around ={45: (-6+1.5*\j,0.5)}] (-0.25-6+1.5*\j,0.25) rectangle (.25-6+1.5*\j,0.75);
\draw[thick, rotate around ={-45: (-6+1.5*\j,0.5)}] (-6+1.5*\j,0.65) -- (-5.85+1.5*\j,0.65) -- (-5.85+1.5*\j,0.5);
}
\foreach \j in {1,...,2}{
\draw[very thick, rotate around ={45: (1.5+1.5*\j,0.5)}] (1+1.5*\j,0) -- (2+1.5*\j,1);
\draw[very thick, rotate around ={45: (1.5+1.5*\j,0.5)}] (2.1+1.5*\j,-0.1) -- (0.9+1.5*\j,1.1);
\draw[thick, fill=myvioletc, rounded corners=2pt, rotate around ={45: (1.5+1.5*\j,0.5)}] (1.25+1.5*\j,0.25) rectangle (1.75+1.5*\j,0.75);
\draw[thick, rotate around ={225: (1.5+1.5*\j,0.5)}] (1.5+1.5*\j,0.65) -- (1.65+1.5*\j,0.65) -- (1.65+1.5*\j,0.5);
}
\foreach \j in {4,...,5}{
\draw[very thick, rotate around ={45: (1.5+1.5*\j,0.5)}] (1+1.5*\j,0) -- (2+1.5*\j,1);
\draw[very thick, rotate around ={45: (1.5+1.5*\j,0.5)}] (2.1+1.5*\j,-0.1) -- (0.9+1.5*\j,1.1);
\draw[thick, fill=myvioletc, rounded corners=2pt, rotate around ={45: (1.5+1.5*\j,0.5)}] (1.25+1.5*\j,0.25) rectangle (1.75+1.5*\j,0.75);
\draw[thick, rotate around ={225: (1.5+1.5*\j,0.5)}] (1.5+1.5*\j,0.65) -- (1.65+1.5*\j,0.65) -- (1.65+1.5*\j,0.5);
}
\draw[thick, fill=black] (-5.25,0.5) circle (0.1cm);
\draw[thick, fill=black] (9.75,0.5) circle (0.1cm);
\Text[x=-5.65, y=0.5]{$b$}
\Text[x=10.15, y=0.5]{$b^\dag$}
\Text[x=-1.55,y=0.5]{$\cdots$}
\Text[x=6,y=0.5]{$\cdots$}
\draw [thick, decorate, decoration={brace,amplitude=8pt, raise=4pt, mirror},yshift=0pt]
(-4.75,-0.1) -- (9.25,-0.1) node [black,midway,yshift=-0.95cm, anchor = south] {$2x_-$};
\Text[x=0,y=2]{}
\Text[x=10.45,y=0.5]{,}
\end{tikzpicture}
\ee
as follows 
\begin{align}
\braket{\alpha_1\ldots \alpha_{x_+}|\mathcal C^\dag[a] \mathcal C[a]|\beta_1\ldots \beta_{x_+}} &= \braket{\alpha_1\ldots \alpha_{x_+},\beta_{x_+}\ldots \beta_{1}| (\mathcal H_{x_+}[\1])^{x_-} | a^\dag\! \circ\cdots\circ a}\,, \\
\braket{\alpha_1\ldots \alpha_{x_-}|\mathcal C[a] \mathcal C^\dag[a]|\beta_1\ldots \beta_{x_-}}  &= \braket{\alpha_1\ldots \alpha_{x_-},\beta_{x_-}\ldots \beta_{1}| (\mathcal V_{x_-}[\1])^{x_+-1} \mathcal V_{x_-}[a] |\circ\cdots\circ}\,.
\end{align}
The case of $y$ half-odd-integer is recovered from the above formulae by means of the substitutions 
\be
y\mapsto \frac{1}{2}-y,\qquad\qquad \mathcal H_x[a]\mapsto \mathcal V_x[a],\qquad\qquad \mathcal V_x[a]\mapsto \mathcal H_x[a]\,.
\ee
In Paper I we identified a class of completely chaotic dual-unitary circuits by requiring $\mathcal H_x[\1]$ and $\mathcal V_x[\1]$ to have only $x+1$ eigenvectors of eigenvalue 1. Circuits with solitons are certainly not included in this class, indeed, as discussed in Paper I, the presence of solitons implies that $\mathcal H_x[\1]$ and $\mathcal V_x[\1]$ have \emph{exponentially many} (in $x$) eigenvectors of with eigenvalue 1.

Note that the corner transfer matrix fulfils 
\begin{align}
&{\rm tr}[\mathcal C[a]^\dag \mathcal C[a]]= 2^{x_+} \braket{r_{x_+}| (\mathcal H_{x_+}[\1])^{x_-} | a^\dag\! \circ\cdots\circ a}= 2^{x_+}  \braket{r_{x_+}| a^\dag\! \circ\cdots\circ a} =1,\label{eq:traceCbarC}
\end{align}
where, as explained in Paper I, the ``rainbow" state
\begin{align}
\bra{r_x} 
&=\frac{1}{2^{x}} \sum_{\alpha_1 \, \alpha_2  \ldots \alpha_{x}=0}^{3} \bra{{\alpha_1} \, \alpha_2 \ldots  {\alpha_x}\, {\alpha_x} \ldots \alpha_2 \, {\alpha_1}} 
= \,
\underbrace{
\begin{tikzpicture}[baseline={([yshift=-1.7ex]current bounding box.center)}, scale=0.55]
\rule[-2cm]{0pt}{6cm}
\def\eps{0.5};
\draw[very thick] (-0.3, 0.5) to[out=+90, in=90] (0.3, 0.5);
\draw[very thick] (-0.6, 0.5) to[out=+90, in=90] (0.6, 0.5);
\draw[very thick] (-0.9, 0.5) to[out=+90, in=90] (0.9, 0.5);
\draw[very thick] (-1.8, 0.5) to[out=+90, in=90] (1.8, 0.5);
\Text[x=-1.35,y=0.5]{$\mydots$}
\Text[x=1.35,y=0.5]{$\mydots$}
\end{tikzpicture}
}_{2x} \;,
\end{align}
is a common eigenstate of $\mathcal V_{x}[a]$ and $\mathcal H_{x}[a]$ with eigenvalue 1.  Equation  \eqref{eq:traceCbarC} is nothing but a statement on the normalisation of the reduced density matrix (cf. \eqref{eq:traceidentityCC}). Since $\mathcal C[a]^\dag \mathcal C[a]$ is positive semi-definite, the relation \eqref{eq:traceCbarC} implies that its spectrum is contained in $[0,1]$.

Finally we remark that under the gauge transformation
\be
U\mapsto (u\otimes v) U (v^{\dag}\otimes u^{\dag}),\qquad\qquad u,v\in {\rm U}(2)\,,
\label{eq:gauge}
\ee
the traces of the reduced transfer matrix transform as follows 
\be
{\rm tr}_A[ \rho_A(t,y;a)^n] \mapsto \begin{cases}
{\rm tr}_A[ \rho_A(t,y;u^\dag a u)^n] &y\in \mathbb Z_{L}-\frac{1}{2}\\
{\rm tr}_A[ \rho_A(t,y;v^\dag a v)^n] &y\in \mathbb Z_{L}\\
\end{cases}\,.
\ee 
This means that the gauge transformation only causes a rotation in the space of ultralocal operators.

%%%%%%%%%%%%%%%%%%%%%%%%%%
\section{Dynamics of the Operator Entanglement}%%
\label{sec:DOE}%%%%%%%%%%%%%%%%%%%
%%%%%%%%%%%%%%%%%%%%%%%%%%%

We are now in a position to study the dynamics of local-operator entanglement in circuits with ultralocal solitons, namely for gates fulfilling \eqref{eq:stillsigmaz}, \eqref{eq:rightmovingsigmaz}, or \eqref{eq:rightleftmovingsigmaz} after
taking appropriate gauge transformations, e.g. (\ref{eq:gauge1}). Let us consider the operator entanglement of the normalised operator 
\be
a = a_{1} \sigma_1+a_{2}  \sigma_2+a_{3}  \sigma_3,\qquad\qquad\qquad |a_{1}|^2+|a_{2}|^2+|a_{3}|^2=\frac{1}{2}\,.
\label{eq:opa}
\ee

\subsection{Still Solitons}

We begin by briefly addressing the trivial case of circuits with motionless solitons, i.e. local gates of the form \eqref{eq:generallocalgatewithstillsolitons}. In this case, after $t$ time steps, \eqref{eq:opa} reads as 
\be
a_y(t) = e^{2 i J t\left[\sigma_{3,y-1}\sigma_{3,y}+\sigma_{3,y}\sigma_{3,y+1}\right]}e^{ i (\eta+\mu) t \sigma_{3,y}}\left(a_1 \sigma_{1,y}+a_2 \sigma_{2,y}\right)+a_3 \sigma_{3,y}\,.
\ee
We see that the range of the time evolving operator does not exceed 3, i.e. the operator entanglement is always 0 for $y\notin \{-1/2,0,1/2\}$ and its value is bounded by $\log 4$.  

\subsection{Chiral Solitons}

Let us now consider the more interesting cases where the circuit features moving ultralocal solitons. As we proved in Sec.~\ref{sec:OECM} (and Appendix~\ref{app:proof}), the latter can not coexist with still solitons. Focussing first on the case of 
chiral left-moving solitons, we invoke the following property.
\begin{Prop}
The vector space generated by 
\be
\Bigl\{\ket{V_x^{\alpha\beta}}\equiv \ket{{\underbrace{\alpha \circ\cdots\circ \beta}_{2x}} }, \qquad\alpha,\beta= \{1,2,3\} \Bigr\},
\ee
is closed under the action of $\mathcal H_x[\1]$.
\end{Prop}
This is easily verified applying the matrix $\mathcal H_x[\1]$ (cf. \eqref{eq:horizontalT}) on the state $\ket{V_x^{\alpha\beta}}$. Indeed, repeated use of \eqref{eq:expansionR} with the constrains \eqref{eq:antiparallelsoliton} gives 
\be
\mathcal H_x[\1] \ket{V_x^{\alpha\beta}} = \!\!\!\!\!\sum_{\alpha_1,\beta_1=1}^3\!\!\!\! \braket{{\alpha,\beta}|\mathbb H|{\alpha_1,\beta_1}}\ket{V_x^{\alpha_1\beta_1}}\,,
\ee
where we introduced the 9-by-9 matrix $\mathbb H$ with elements 
\be
\braket{{\alpha,\beta}|\mathbb H|{\gamma,\delta}} \equiv \bigl(R^{\alpha}_{0\gamma} R^{\beta}_{0\delta}+R^{\alpha}_{3\gamma} R^{\beta}_{3\delta}\bigr).
\ee
For integer $y\in\mathbb Z_L$ this property has remarkable consequences on the structure of $\mathcal C[a]^\dag \mathcal C[a]$: it implies that all the elements of $\mathcal C[a]^\dag \mathcal C[a]$ are zero except for a single non-trivial 3-by-3 block, which we denote by $\mathbb B[a]$. Namely 
\begin{align}
\braket{\alpha_1\ldots \alpha_{x_+}|\mathcal C[a]^\dag \mathcal C[a]|\beta_1\ldots \beta_{x_+}} &= \braket{\alpha_1\ldots \alpha_{x_+},\beta_{x_+}\ldots \beta_{1}| (\mathcal H_{x_+}[\1])^{x_-} | a^\dag\! \circ\cdots\circ a}\notag\\
&= \left(2\sum_{\alpha,\beta=1}^3 \braket{\alpha_1,\beta_1|({\mathbb H})^{x_-}|\alpha,\beta} a^*_\alpha a_\beta \right)\prod_{j=2}^{x_+}\delta_{\alpha_j,0}\,\delta_{\beta_j,0} \notag\\
& \equiv \braket{\alpha_1|\mathbb B[a]|\beta_1} \prod_{j=2}^{x_+}\delta_{\alpha_j,0}\,\delta_{\beta_j,0}\,.
\end{align}
Diagonalising the block we have 
\be
{\rm tr}[(\mathcal C[a]^\dag \mathcal C[a])^\alpha]={\rm tr}[(\mathbb B[a])^\alpha] = e_1^\alpha+ e_2^\alpha +e_3^\alpha\,,\qquad \alpha\in \mathbb R\,,
\ee
where $ e_j\in[0,1]$ are the non-trivial eigenvalues of $\mathcal C[a]^\dag \mathcal C[a]$, subject to the constraint (cf. Eq.~\eqref{eq:traceCbarC})
\be
{\rm tr}[\mathcal C[a]^\dag \mathcal C[a]]={\rm tr}[\mathbb B_a]=e_1+ e_2 +e_3 =1. 
\ee
This means that, although the eigenvalues depend on $t$ and $y$, they can never all be simultaneously zero and the entanglement is always bounded. In particular we have 
\be
1 \geq {\rm tr}[(\mathcal C[a]^\dag \mathcal C[a])^\alpha] \geq 3^{1-\alpha} \,,
\qquad \forall y\in\mathbb Z_L\,, \qquad \forall \alpha \geq 1 \,, \qquad \forall t\,.
\ee
This immediately implies 
\be
S^{(n)}(y,t) \leq \log 3.
\ee
Repeating the same reasoning for a right-moving soliton we arrive at the following property
\begin{Prop}
\label{prop:saturation}
In circuits with two-dimensional local Hilbert space featuring left-moving (right-moving) solitons, the entanglement of ultra-local operators initially supported on integer (half-odd-integer) sites is bounded from above {by $\log 3$.}
\end{Prop}

Importantly this property constrains only operators on the appropriate sublattice: in circuits with only left-moving solitons the operator entanglement of local operators on half-odd integer sites (and with no overlap with the soliton) is generically unbounded, see Fig.~\ref{fig:singlesoliton} for a representative example. In particular, as demonstrated in Fig.~\ref{fig:singlesoliton_1_t} and Tab.~\ref{table:coefficients}, the numerical data is consistent with a logarithmic growth. Such a logarithmic growth of local-operator entanglement was observed before in integrable models~\cite{PZ07, PP:OEIsing, PP:OEXY, OECFT} and for the quantum Rule 54 reversible cellular automaton~\cite{ADM:OperatorEntanglement}, and should be contrasted with the linear growth observed in chaotic circuits ~\cite{nahum18-2,BKP:OEergodicandmixing}.

\begin{table}[b]
\centering
\begin{tabular}{|c| c c c c c c c c|} 
 \hline
 t & 1 &2&3&4&5&6&7&8 \\ [0.5ex] 
 \hline
 \#  &2& 4& 8 &16& 31& 49& 64& 81 \\ 
 \hline
\end{tabular}
\caption{
Number of non-zero Schmidt coefficients for local operators on half-odd integer sites in circuits with only left-moving solitons. Note that the number of Schmidt coefficients appears unbounded and the last three entries have values $(t-1)^2$, consistent with a logarithmic growth of $S^{(0)}{(0,t)} \approx 2 \log (t-1) $.}
\label{table:coefficients}
\end{table}

To produce the blue data points in Fig.~\ref{fig:singlesoliton} we explicitly constructed $\mathbb B[a]$ by powering the reduced horizontal transfer matrix $\mathbb H$. In particular, the limiting value of the entropy can be understood by noting that  
\be
\ket{\bar r_1}\equiv  \frac{1}{\sqrt 3} (2 \ket{r_1} - \ket{\circ\circ}) = \frac{1}{\sqrt 3}\sum_{\alpha=1}^3 \ket{\alpha\alpha}\,, 
\ee 
is always an eigenstate of $\mathbb H$ with eigenvalue $1$ (this can be seen directly by using the relations \eqref{eq:normalization}). Assuming that, in the absence of additional symmetries, all other eigenvalues of $\mathbb H$ have magnitude strictly smaller than one we have  
\be
\lim_{t\to\infty} ({\mathbb H})^{x_+} = \ket{\bar r_1}\bra{\bar r_1}\qquad\Rightarrow\qquad  \lim_{t\to\infty} \mathbb B[a] = \frac{1}{3}\1_3\,,
\ee
where $\1_3$ is the 3-by-3 identity. This gives
\be
\lim_{t \to \infty} S^{(n)}(y,t) = \log 3,\qquad\qquad \forall y\in\mathbb Z_L\,,\qquad \forall n.
\ee

\begin{figure}
\centering
\includegraphics[width=0.71 \linewidth]{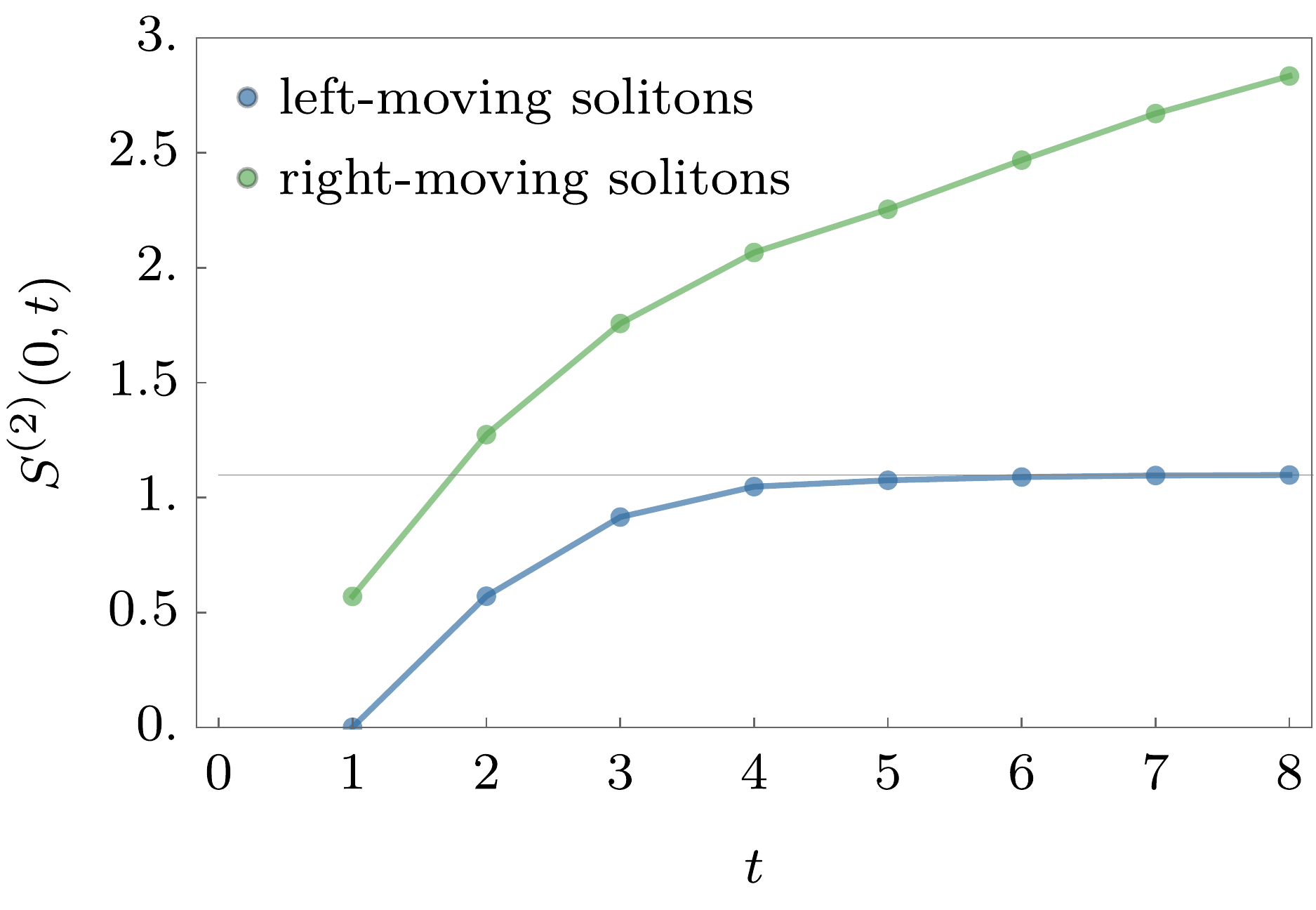}
\caption{
Renyi-2 operator entanglement entropy of $\sigma_1$ at site $y=0$ versus time for circuits featuring only left-moving (blue) or only right-moving (green) solitons. In the first case the entropy saturates, while in the the second it does not. The limiting value $\log 3$ is depicted with a thin grey line.
We used a local gate $U=V[J]( \1 \otimes v)$ ($U=V[J](v \otimes \1 )$ ) for blue (green) data points with $J=-0.3$ and $v$ parametrised by $r=0.7$ and $\theta = \phi =-0.7$ (see the Supplemental Material of \cite{BKP:dualunitary} for details on the parametrisation of $v$).}
\label{fig:singlesoliton}
\end{figure}

\begin{figure}
\centering
\includegraphics[width=0.71 \linewidth]{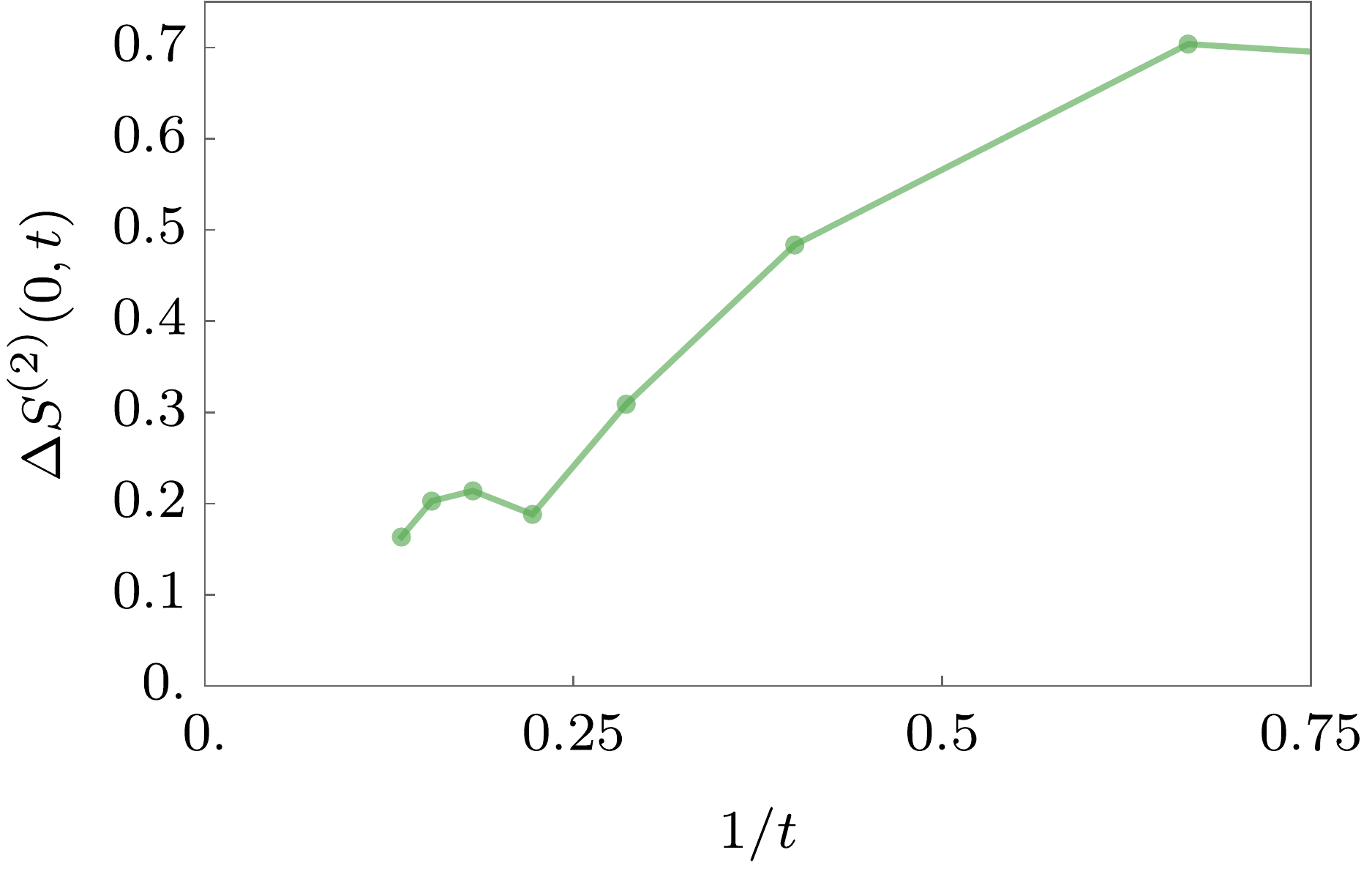}
\caption{The instantaneous slope of Renyi-2 operator entanglement entropy of $\sigma_1$ at site $y=0$, $\Delta S^{(2)}(0,t-1/2) \equiv S^{(2)}(0,t)-S^{(2)}(0,t-1)$, versus $1/t$ for circuits featuring only right-moving solitons. The decrease in the slope is consistent with a logarithmic growth of $S^{(2)}(0,t)$, which would give $\Delta S^{(2)}(0,t)\sim 1/t$. We used a local gate $U=V[J]( v \otimes \1)$ with $J=-0.3$ and $v$ parametrised by $r=0.7$ and $\theta = \phi =-0.7$ (see the Supplemental Material of \cite{BKP:dualunitary} for details on the parametrisation of $v$).}
\label{fig:singlesoliton_1_t}
\end{figure}

\subsection{Solitons in Both Directions}

Property \ref{prop:saturation} also implies that in circuits featuring ultralocal solitons moving in \emph{both directions} the operator entanglement of \emph{all local operators is bounded}. We remark that this bound is more stringent than the one recently found in Ref.~\cite{ADM:OperatorEntanglement} for the quantum Rule 54 reversible cellular automaton. While there a polynomial growth of the complexity is still possible, in qubits circuits with ultralocal solitons in both directions the complexity is strictly bounded by a constant. 

Focussing on this case, we explicitly compute $\mathbb B[a]$ as a function of time, initial position, and parameters of the gate (cf. Eqs.~\eqref{eq:mostgeneralnonchiralgate}) up to similarity transformations. The final expression can be used to determine \emph{all} operator entanglement entropies. Note that this can be in principle done also in the generic case of a single chiral soliton but it produces unwieldy expressions.  

For circuits with local gate \eqref{eq:mostgeneralnonchiralgate} the reduced horizontal transfer matrix $\mathbb H$ reads as  
\be
\mathbb H = \mathbb R_3[\theta_+]\otimes \mathbb R_3[\theta_+]\cdot \tilde{\mathbb H}\,,\qquad\qquad \theta_+\equiv\eta_{+}+\mu_{+}\,.
\ee
Here $\mathbb R_3[\theta]$ is a rotation by an angle $\theta\in\mathbb R$ around the axis 3 in the three dimensional space 
\be
\mathbb R_3[\theta] \mathbb R_3[\theta]^T= \mathbb R_3[\theta]\mathbb R_3[-\theta] =\1\,,
\ee
while $\tilde{\mathbb H}$ reads as 
\be
\tilde{\mathbb H}=\sum_{j=1}^9 \mu_j \ket{\mu_j}\bra{\mu_j}\,,
\ee
where  
\begin{align}
&\ket{\mu_1}=\ket{{33}}, & &\ket{\mu_2}=\frac{1}{\sqrt{2}}\left(\ket{{11}}+\ket{{22}}\right), & &\ket{\mu_3}=\frac{1}{\sqrt{2}}\left(\ket{{12}}-\ket{{21}}\right),\\
&\ket{\mu_4}=\frac{1}{\sqrt{2}}\left(\ket{{11}}-\ket{{22}}\right), &  &\ket{\mu_5}=\frac{1}{\sqrt{2}}\left(\ket{{12}}+\ket{{21}}\right), & & \ket{\mu_6}=\frac{1}{\sqrt{2}}\left(\ket{{13}}-\ket{{31}}\right),\\
&\ket{\mu_7}=\frac{1}{\sqrt{2}}\left(\ket{{23}}-\ket{{32}}\right), & & \ket{\mu_8}=\frac{1}{\sqrt{2}}\left(\ket{{31}}+\ket{{13}}\right), & & \ket{\mu_9}=\frac{1}{\sqrt{2}}\left(\ket{{23}}+\ket{{32}}\right),
\end{align}
and 
\begin{align}
&\mu_1=\mu_2= (-)^{s_1} \mu_3=1,  & &\mu_4= (-)^{s_1}\mu_5 =-\cos(4 J)\notag\\
&(-)^{s_1}  \mu_6=\mu_7= (-)^{s_1} \mu_8= \mu_9 = \sin(2J)\,. 
\end{align}
Importantly 
\be
\mathbb R_3[\theta]\otimes \mathbb R_3[\theta]\cdot \tilde{\mathbb H} = \tilde{\mathbb H}\cdot  \mathbb R_3[ (-)^{s_1}\theta]\otimes \mathbb R_3[(-)^{s_1} \theta] \,.
\ee
This means that 
\be
\braket{\alpha|\mathbb B[a]|\beta}  = \braket{\alpha,\beta|(\mathbb R_3[\xi]\otimes \mathbb R_3[\xi])\cdot \tilde{\mathbb H}^{x_+}|a^\dag a}
\ee
where $\xi$ depends on $s_1$ and $x_+$. The matrix $\mathbb R_3[\xi]\otimes \mathbb R_3[\xi]$, however, does not affect the spectrum of $\mathbb B[a]$. Indeed, it only generates a similarity transformation. Explicitly, we have    
\be
\mathbb B[a] = \mathbb R_3[\xi] \widetilde{\mathbb B}[a] \mathbb R_3[\xi]^T,
\ee
where we set 
\be
\!\!\braket{\alpha|\widetilde{\mathbb B}[a]|\beta} =\!\!\!\!\! \sum_{\alpha_1,\beta_1=1}^3\!\!\! 2\braket{\alpha,\beta| \tilde{\mathbb H}^{x_+}|\alpha_1,\beta_1} a^*_{\alpha_1} a_{\beta_1} 
\!=\! \sum_{j=1}^9\sum_{\alpha_1,\beta_1=1}^3 \!\!\! 2 \mu_j^{x_-} 
a^*_{\alpha_1}a^{\phantom{\ast}}_{\beta_1} \braket{{\alpha,\beta}|\mu_j}\braket{\mu_j|{\alpha_1,\beta_1}}.
\ee
In matrix form it reads as\footnote{we absorbed some factors $(-1)^{s_1 x_+}$ in a redefinition of the Gell-mann matrices that preserves the algebra.} 
\begin{align}
\widetilde{\mathbb B}[a] = &\frac{1}{\sqrt 6}\lambda_0
+ 2{\rm Re}[a_1 a_2^*] (-\cos(4 J))^{x_+} \,\lambda_1
+ 2{\rm Im}[ a_1 a_2^*] \,\lambda_2 \notag \\
&+ \left({| a_1| ^2-|a_2| ^2}\right) (-\cos(4 J))^{x_+} \,\lambda_3 + 2{\rm Re}[a_1 a_3^*] (\sin(2 J))^{x_+} \, \lambda_4 \notag \\
& +  2{\rm Im}[a_1 a_3^*] (\sin(2 J))^{x_+} \,\lambda_5+ 2{\rm Re}[a_2 a_3^*] (\sin(2 J))^{x_+} \,\lambda_6\notag\\
& +2 {\rm Im}[a_2 a_3^*] (\sin(2 J))^{x_+} \,\lambda_7 +
\left(\frac{| a_1| ^2+|a_2| ^2-2 |a_3|^2}{\sqrt 3 }\right)\lambda_8 
\end{align}
where $\{\lambda_j\}$ are the Gell-Mann matrices ($\lambda_0=\sqrt\frac{2}{3}\1_3$). The traces 
\be
{\rm tr}[({\mathbb B}[a] )^\alpha] = {\rm tr}[(\widetilde{\mathbb B}[a] )^\alpha], 
\ee
can then be computed using 
\be
{\rm tr}[\lambda_a \lambda_b]=2 \delta_{ab},
\label{eq:tracegellmann}
\ee
and the Gell-Mann algebra 
\be
[\lambda_a,\lambda_b] = i f^{abc} \lambda_c\,,
\ee
where the only non-zero structure constants are
\begin{align}
&f^{123}=2, & &  f^{147}=f^{246}=f^{257}=f^{345}= 1,\\
&f^{156}=f^{367}=- 1, & &f^{458} =f^{678} = \sqrt 3,
\end{align}
and all other elements obtained by permutation of the indices ($f^{abc}$ is completely antisymmetric). In particular, we have 
\begin{align}
S^{(2)}(y,t)=-\log{\rm tr}[(\widetilde{\mathbb B}[a] )^2]=& -\log\Bigl[ \frac{1}{3} + \frac{2}{3}\left({| a_1| ^2+|a_2| ^2-2 |a_3|^2}\right)^2 + 8 {\rm Im}[ a_1 a_2^*]^2\notag\\
 &\quad\qquad + 2 \left(\left (| a_1| ^2-|a_2| ^2\right)^2+4{\rm Re}[a_1 a_2^*]^2\right) (\cos(4 J))^{2 x_+}\notag\\
 &\quad\qquad  + 8 |a_3|^2(|a_1|^2+|a_2|^2)(\sin(2 J))^{2x_+}\Bigr].
\end{align}
Note that $\widetilde{\mathbb B}[a]$ and hence the operator-entanglement entropies depend only on the parameter $J$. This means that in all circuits with solitons in both directions the operator entanglement is the same as in, e.g., the dual-unitary trotterised XXZ (cf. \eqref{eq:XXZtrotterised}). In particular, the free self-dual kicked Ising model (cf. \eqref{eq:SDKI}) has ${J=0}$ and the operator entanglement is constant.

%%%%%%%%%%%%%
\section{Conclusions}%%
\label{sec:conclusions}%%
%%%%%%%%%%%%%

We have used the operator entanglement to characterise the complexity of operator dynamics in local quantum circuits on chains of qubits, i.e. quantum circuits with two-dimensional local Hilbert space. Specifically, we have provided a complete classification of circuits possessing \emph{ultralocal solitons} --- operators with unit range that are simply translated by the time evolution --- and we have shown that the entanglement of a given operator depends on its ``initial light-ray'', i.e. the line connecting its initial position to the centre of the nearest local gate. For operators with initial light-rays {\em crossed} by moving solitons the entanglement is bounded from above {by $\log 3$} for all times. Instead, in the opposite case the entanglement appears to grow indefinitely as for generic circuits {(if the initial operator has no overlap with the conserved mode)}. 
{For qubit chains, circuits with solitons (integrable or not) seem to be the only case, where the operator complexity does not grow.}

Remarkably, we have also proven that the presence of moving solitons immediately implies that the circuit has to be dual-unitary~\cite{BKP:dualunitary} (while the converse is not true: generic dual-unitary circuits do not exhibit solitons and have exponentially growing complexity~\cite{BKP:OEergodicandmixing}). 

An interesting question for further research is to provide a similar classification for solitons of higher range or for local circuits with larger local Hilbert space. Such circuits can be thought of as toy ``coarse grained'' versions of integrable models, with the solitons playing the role of quasiparticles, and can be used to explain the generic slow growth of complexity observed in integrable models~\cite{ADM:OperatorEntanglement}.

\section*{Acknowledgements}

We thank Lorenzo Piroli, Marko Medenjak, Vincenzo Alba, J\'er\^ome Dubail, Andrea De Luca, and Tibor Rakovszky for useful discussions. BB thanks LPTMS Orsay for hospitality during the completion of this work.

\paragraph{Funding Information} This work has been supported by the European Research Council under the Advanced Grant No. 694544 -- OMNES, and by the Slovenian Research Agency (ARRS) under the Programme P1-0402.

\appendix

\section{Constrains From Solitons on Local Gates}
\label{app:proof}

In this appendix we show how to obtain the conditions \eqref{eq:solitons} from \eqref{eq:conservedglobal}. We start by considering \eqref{eq:conservedglobal} and take the scalar product with $\tilde a_{y+x}$, which is represented as  
\begin{equation}
\begin{tikzpicture}[baseline=(current  bounding  box.center), scale=0.55]
\def\eps{-0.5};
\draw[very thick] (-.5-6,1) -- (-6,0.5);
\draw[very thick] (-6,0.5) -- (0.5-6,1);
\draw[very thick] (-6,0.5) -- (0.5-6,0);
\draw[very thick] (-6,0.5) -- (-0.5-6,0);
\draw[thick, fill=black] (-6.5,0) circle (0.1cm); 
\draw[thick, fill=white] (-5.5,0) circle (0.1cm); 
\draw[thick, fill=white] (-4.5,1) circle (0.1cm); 
\draw[thick, fill=myviolet, rounded corners=2pt] (-0.25-6,0.25) rectangle (.25-6,0.75);
\draw[thick] (-6,.65) -- (-5.85,.65) -- (-5.85,.5);
\foreach \jj[evaluate=\jj as \j using 2*(ceil(\jj/2)-\jj/2), evaluate=\jj as \aa using int(3-\jj/2), evaluate=\jj as \bb using int(4+\jj/2)] in {0,...,0}
\foreach \i in {\aa,...,\bb}
{
\draw[very thick] (.5-2*\i-1*\j,2+1*\jj) -- (1-2*\i-1*\j,1.5+\jj);
\draw[very thick] (1-2*\i-1*\j,1.5+1*\jj) -- (1.5-2*\i-1*\j,2+\jj);
\draw[very thick] (.5-2*\i-1*\j,1+1*\jj) -- (1-2*\i-1*\j,1.5+\jj);
\draw[very thick] (1-2*\i-1*\j,1.5+1*\jj) -- (1.5-2*\i-1*\j,1+\jj);
\draw[thick, fill=myviolet, rounded corners=2pt] (0.75-2*\i-1*\j,1.75+\jj) rectangle (1.25-2*\i-1*\j,1.25+\jj);
\draw[thick] (1-2*\i-1*\j,1.65+1*\jj) -- (1.15-2*\i-1*\j,1.65+1*\jj) -- (1.15-2*\i-1*\j,1.5+1*\jj);
}
\foreach \j in {1,...,1}{
\draw[thick, fill=white] (-5.5+\j,\j) circle (0.1cm);
\draw[thick, fill=white] (-6.5-\j,\j) circle (0.1cm);
}
\draw[thick, fill=white] (-4.5,2) circle (0.1cm);
\draw[thick, fill=white] (-6.5,2) circle (0.1cm);
\draw[thick, fill=black] (-5.5,2) circle (0.1cm);
\draw[thick, fill=white] (-7.5,2) circle (0.1cm);
\draw [thick, decorate, decoration={brace,amplitude=3pt, raise=0pt},yshift=0pt]
(-6.5,2.2) -- (-5.5,2.2) node [black,midway,yshift=0.3cm] {$x$};
\Text[x=-6.5,y=-0.5]{$\tilde a$}
\Text[x=-5.25,y=2.5]{$\tilde a$}
\end{tikzpicture}
=e^{i \phi}\,.
\end{equation}
Where we have chosen $y$ integer ($y$ half-odd-integer is treated in a totally analogous way) and we followed the notation of Paper I. Depending on $x$ this condition can be written as 
\bes
\begin{align}
&\frac{1}{2}{\rm tr}[{\mathcal M^{C}_+}(\mathcal M^{C}_+(\tilde a)){\tilde a}^\dag]=e^{i \phi} & & x=1,\\
&\frac{1}{2}{\rm tr}[\mathcal M^S_{+}((\mathcal M^C_+(\tilde a)){\tilde a}^\dag]=e^{i \phi}  & & x=1/2,\\
&\frac{1}{2}{\rm tr}[\mathcal M^S_{-}(\mathcal M^S_{+}(\tilde a)){\tilde a}^\dag]=e^{i \phi}  & & x=0,\\
&\frac{1}{2}{\rm tr}[\mathcal M^C_-(\mathcal M^S_{+}(\tilde a)){\tilde a}^\dag]=e^{i \phi}  & & x=-1/2,
\end{align}
\label{eq:solitonmaps}
\esu
where we introduced the following \emph{unistochastic} single-wire maps (see also Ref.~\cite{BKP:dualunitary})
\begin{equation}
\mathcal M^C_+(\tilde a) = \begin{tikzpicture}[baseline=(current  bounding  box.center), scale=0.55]
\def\eps{-0.5};
\draw[very thick] (-.5-6,1) -- (-6,0.5);
\draw[very thick] (-6,0.5) -- (0.5-6,1);
\draw[very thick] (-6,0.5) -- (0.5-6,0);
\draw[very thick] (-6,0.5) -- (-0.5-6,0);
\draw[thick, fill=black] (-6.5,0) circle (0.1cm); 
\draw[thick, fill=white] (-5.5,0) circle (0.1cm); 
\draw[thick, fill=myviolet, rounded corners=2pt] (-0.25-6,0.25) rectangle (.25-6,0.75);
\draw[thick] (-6,.65) -- (-5.85,.65) -- (-5.85,.5);
\draw[thick, fill=white] (-6.5,1) circle (0.1cm);
\Text[x=-6.5,y=-0.5]{$\tilde a$}
\Text[x=-6.5,y=1.25]{}
\end{tikzpicture},
\qquad
\mathcal M^C_-(\tilde a) = \begin{tikzpicture}[baseline=(current  bounding  box.center), scale=0.55]
\def\eps{-0.5};
\draw[very thick] (-.5-6,1) -- (-6,0.5);
\draw[very thick] (-6,0.5) -- (0.5-6,1);
\draw[very thick] (-6,0.5) -- (0.5-6,0);
\draw[very thick] (-6,0.5) -- (-0.5-6,0);
\draw[thick, fill=white] (-6.5,0) circle (0.1cm); 
\draw[thick, fill=black] (-5.5,0) circle (0.1cm); 
\draw[thick, fill=myviolet, rounded corners=2pt] (-0.25-6,0.25) rectangle (.25-6,0.75);
\draw[thick] (-6,.65) -- (-5.85,.65) -- (-5.85,.5);
\draw[thick, fill=white] (-5.5,1) circle (0.1cm);
\Text[x=-5.5,y=-0.5]{$\tilde a$}
\Text[x=-6.5,y=1.25]{}
\end{tikzpicture},
\qquad
\mathcal M^S_{+}(\tilde a) = \begin{tikzpicture}[baseline=(current  bounding  box.center), scale=0.55]
\def\eps{-0.5};
\draw[very thick] (-.5-6,1) -- (-6,0.5);
\draw[very thick] (-6,0.5) -- (0.5-6,1);
\draw[very thick] (-6,0.5) -- (0.5-6,0);
\draw[very thick] (-6,0.5) -- (-0.5-6,0);
\draw[thick, fill=black] (-6.5,0) circle (0.1cm); 
\draw[thick, fill=white] (-5.5,0) circle (0.1cm); 
\draw[thick, fill=myviolet, rounded corners=2pt] (-0.25-6,0.25) rectangle (.25-6,0.75);
\draw[thick] (-6,.65) -- (-5.85,.65) -- (-5.85,.5);
\draw[thick, fill=white] (-5.5,1) circle (0.1cm);
\Text[x=-6.5,y=-0.5]{$\tilde a$}
\Text[x=-6.5,y=1.25]{}
\end{tikzpicture},
\qquad
\mathcal M^S_{-}(\tilde a) = \begin{tikzpicture}[baseline=(current  bounding  box.center), scale=0.55]
\def\eps{-0.5};
\draw[very thick] (-.5-6,1) -- (-6,0.5);
\draw[very thick] (-6,0.5) -- (0.5-6,1);
\draw[very thick] (-6,0.5) -- (0.5-6,0);
\draw[very thick] (-6,0.5) -- (-0.5-6,0);
\draw[thick, fill=black] (-5.5,0) circle (0.1cm); 
\draw[thick, fill=white] (-6.5,0) circle (0.1cm); 
\draw[thick, fill=myviolet, rounded corners=2pt] (-0.25-6,0.25) rectangle (.25-6,0.75);
\draw[thick] (-6,.65) -- (-5.85,.65) -- (-5.85,.5);
\draw[thick, fill=white] (-6.5,1) circle (0.1cm);
\Text[x=-5.5,y=-0.5]{$\tilde a$}
\Text[x=-6.5,y=1.25]{}
\end{tikzpicture}.
\end{equation}
Since these maps are all contracting, namely 
\be
\|\mathcal M^{C/S}_{\pm}(\tilde a)\|_1\leq \|\tilde a\|_1\,,
\label{eq:contracting}
\ee
where $\|A\|_1={\rm tr}[\sqrt{AA^\dag}]$ is the trace norm, we can rewrite \eqref{eq:solitonmaps} as 
\begin{align}
&\begin{cases}
\mathcal M^C_+(\tilde a)=e^{i \phi_1} \tilde b \\
\mathcal M^C_+(\tilde b)=e^{i (\phi-\phi_1)} \tilde a \\
\end{cases}& & x=1,\\
&\begin{cases}
\mathcal M^C_+(\tilde a)=e^{i \phi_1} \tilde b \\
\mathcal M^S_+(\tilde b)=e^{i (\phi-\phi_1)} \tilde a \\
\end{cases} & & x=1/2,\\
&\begin{cases}
\mathcal M^S_+(\tilde a)=e^{i \phi_1} \tilde b \\
\mathcal M^S_-(\tilde b)=e^{i (\phi-\phi_1)} \tilde a \\
\end{cases}  & & x=0,\\
&\begin{cases}
\mathcal M^S_+(\tilde a)=e^{i \phi_1} \tilde b \\
\mathcal M^C_-(\tilde b)=e^{i (\phi-\phi_1)} \tilde a \\
\end{cases}  & & x=-1/2,
\end{align}
where $\tilde b$ is a traceless operator in ${\rm End}(\mathbb C^2)$. Or, equivalently 
\bes
\label{eq:mapsU}
\begin{align}
&\begin{cases}
U^\dag (\tilde a\otimes\1) U=e^{i \phi_1} (\1\otimes \tilde b) \\
U^\dag (\tilde b\otimes\1) U=e^{i (\phi-\phi_1)} \1\otimes \tilde a \\
\end{cases}& & x=1,\\
&\begin{cases}
U^\dag (\tilde a\otimes\1) U=e^{i \phi_1} \1\otimes \tilde b \\
U^\dag (\tilde b\otimes\1) U=e^{i (\phi-\phi_1)} \tilde a\otimes\1 \\
\end{cases} & & x=1/2,\\
&\begin{cases}
U^\dag (\tilde a\otimes\1) U=e^{i \phi_1} \tilde b\otimes\1 \\
U^\dag (\1\otimes \tilde b) U=e^{i (\phi-\phi_1)} \1 \otimes \tilde a \\
\end{cases}  & & x=0,\\
&\begin{cases}
U^\dag (\tilde a\otimes\1) U=e^{i \phi_1} \tilde b\otimes\1 \\
U^\dag (\1\otimes \tilde b) U=e^{i (\phi-\phi_1)} \tilde a\otimes\1 \\
\end{cases}  & & x=-1/2\,.
\end{align}
\esu
First we note that these relations can be simplified by means of the following lemma. %{\PK $h,k$ used after, change to $l,m$.}
\begin{Lemma}
\label{lemma:herm}
The relation
\be 
U^\dag S^l (\tilde a\otimes\1) S^l U=e^{i \phi} S^m \tilde b\otimes\1  S^m,
\label{eq:solitonphase}
\ee
with $\tilde a,\tilde b\in{\rm End}(\mathbb C^2)$ traceless, is equivalent either to  
\be
U^\dag S^l (a\otimes\1) S^l U=+ S^m  b\otimes\1  S^m,
\label{eq:hermitiansolitonplus}
\ee
or to 
\be
U^\dag S^l (a\otimes\1) S^l U=- S^m  b\otimes\1  S^m,
\ee
where $l,m\in\{0,1\}$, $a,b\in{\rm End}(\mathbb C^2)$ \emph{hermitian and traceless} and $S$ is the ``swap-gate"
\be
S (a\otimes b) S^\dag = b\otimes a\,. 
\label{eq:swap}
\ee
An explicit expression of the swap gate is $S=V[\pi/4]$ (cf.  Eq.~\eqref{eq:Vgate}). 
\end{Lemma}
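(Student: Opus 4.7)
The plan is to reduce the general relation \eqref{eq:solitonphase} to one between hermitian traceless operators with coefficient $\pm 1$, by separating hermitian and anti-hermitian parts of $\tilde a$. The key technical fact underlying the reduction is that, since $U$ and the involution $S$ are unitary, $[S^l(X\otimes\1)S^l]^\dag=S^l(X^\dag\otimes\1)S^l$, so the map $X\mapsto U^\dag S^l(X\otimes\1)S^l U$ commutes with hermitian conjugation. Consequently the hermitian and anti-hermitian components of $\tilde a$ decouple.

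First I would take the hermitian conjugate of \eqref{eq:solitonphase}, producing the twin identity with $\tilde a\to\tilde a^\dag$, $\tilde b\to\tilde b^\dag$ and $\phi\to-\phi$. Adding and subtracting these two relations (and dividing by $2$ and $2i$ respectively) I obtain
\[
U^\dag S^l(h\otimes\1)S^l U = S^m(A\otimes\1)S^m,\qquad U^\dag S^l(k\otimes\1)S^l U = S^m(B\otimes\1)S^m,
\]
where $h\equiv(\tilde a+\tilde a^\dag)/2$, $k\equiv(\tilde a-\tilde a^\dag)/(2i)$, $A\equiv(e^{i\phi}\tilde b+e^{-i\phi}\tilde b^\dag)/2$ and $B\equiv(e^{i\phi}\tilde b-e^{-i\phi}\tilde b^\dag)/(2i)$ are all automatically hermitian and traceless; crucially, no complex phase remains.

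Since $\tilde a\neq 0$ (the case $\tilde a=\tilde b=0$ being trivial), at least one of $h,k$ is nonzero. I would pick the nonzero one, call it $\tilde h$, with hermitian counterpart $\tilde A$ on the right. Unitary conjugation preserves the Hilbert--Schmidt norm, so $\|\tilde h\|_{\mathrm{HS}}=\|\tilde A\|_{\mathrm{HS}}$; dividing both sides by $n\equiv(\tfrac{1}{2}\mathrm{tr}\,\tilde h^2)^{1/2}$ produces a normalised hermitian traceless operator $a\equiv\tilde h/n$ on the left and another normalised hermitian traceless operator on the right, which can be written as $\pm b$ for a normalised hermitian traceless $b$ (the sign being the standard ambiguity in choosing a unit representative). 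This yields exactly \eqref{eq:hermitiansolitonplus} or its sign-reversed counterpart. The converse implication is immediate, as each of the two is a special case of \eqref{eq:solitonphase} with $\phi\in\{0,\pi\}$, $\tilde a=a$, $\tilde b=\pm b$. The auxiliary identity $S=V[\pi/4]$ would be verified separately by direct computation using $e^{-i(\pi/4)\sigma_\alpha\otimes\sigma_\alpha}=(\1-i\sigma_\alpha\otimes\sigma_\alpha)/\sqrt 2$ and the Pauli algebra. The argument does not present a serious obstacle: the lemma reduces to the decoupling of hermitian and anti-hermitian parts under unitary conjugation, followed by a normalisation.
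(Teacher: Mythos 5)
Your argument is correct, but it is not the route the paper takes. The paper's proof works with the \emph{quadratic} hermitian combinations $h=\tilde a\tilde a^\dag-\1$ and $k=\tilde b\tilde b^\dag-\1$ (traceless after Hilbert--Schmidt normalisation), which automatically satisfy \eqref{eq:hermitiansolitonplus} with a $+$ sign because the phase $e^{i\phi}$ cancels against $e^{-i\phi}$; it must then treat separately the degenerate case $h=0$, where $\tilde a$ is proportional to a unitary, writing $\tilde a=e^{i\theta_1}u\sigma_3 u^\dag$ and using hermiticity of both sides to force the residual phase to be $\pm1$. Your \emph{linear} decomposition into hermitian and anti-hermitian parts avoids this case split altogether: at least one of $h=(\tilde a+\tilde a^\dag)/2$, $k=(\tilde a-\tilde a^\dag)/(2i)$ is nonzero, the map $X\mapsto U^\dag S^l(X\otimes\1)S^lU$ commutes with $\dagger$ because $S$ is hermitian unitary, and Hilbert--Schmidt isometry guarantees the image is nonzero and normalisable. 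This is arguably cleaner as a proof of the lemma in isolation. What the paper's construction buys, and yours does not, is that $h$ and $k$ are built from $\tilde a$ and $\tilde b$ \emph{separately and without reference to the phase}; when the lemma is applied to the coupled pairs of relations in \eqref{eq:mapsU}, the same hermitian pair $(h,k)$ therefore appears consistently in both equations of each system (yielding \eqref{eq:chiral} with a single shared $a,b$), whereas your $A=(e^{i\phi}\tilde b+e^{-i\phi}\tilde b^\dag)/2$ mixes $\tilde b$ with the unknown phase and would require an extra consistency argument at that later stage. For the lemma as stated, however, your proof is complete.
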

\begin{proof} 
Let us consider the Hermitian operators
\be
h = \tilde a {\tilde a}^\dag - \1\qquad k = \tilde b {\tilde b}^\dag - \1\,.
\ee
It is immediate to see that $h$ and $k$ are traceless and fulfill \eqref{eq:hermitiansolitonplus} for $a=h$ and $b=k$. Therefore, if none of $h$ and $k$ is the null operator we can set $a=h$, $b=k$ and the proof is concluded. If, instead, $h=0$, from \eqref{eq:solitonphase} follows that also $k=0$. Therefore the traceless operators $\tilde a$ and $\tilde b$ are also unitary. This means
\be
\tilde a= e^{i\theta_1} u \sigma_3 u^\dag,\qquad \tilde b= e^{i\theta_2} v \sigma_3 v^\dag 
\ee
where $\theta_1,\theta_2\in\mathbb R$ and $u,v\in {\rm SU}(2)$. This means that the hermitian operators $a=e^{-i \theta_1}\tilde a$ and $b =e^{-i \theta_2}\tilde b$ fulfil \eqref{eq:solitonphase} with $\phi\to\phi-\theta_1+\theta_2$. However, since $a$ and $b$ are hermitian we have  
\begin{align}
&e^{i (\phi-\theta_1+\theta_2)} S^k(\1\otimes b)S^k= U^\dag S^h( a\otimes\1)S^h U = \left(U^\dag S^h (a\otimes\1)S^h U\right)^\dag=\notag\\
& = e^{- i (\phi-\theta_1+\theta_2)} S^k(\1\otimes b)S^k\,,
\end{align}
meaning that $e^{i \phi-\theta_1+\theta_2}=\pm 1$. This concludes the proof.   
\end{proof}

Using Lemma \ref{lemma:herm} we can rewrite \eqref{eq:mapsU} as follows 
\bes
\begin{align}
&\begin{cases}
U^\dag ( a\otimes\1) U=\pm \1\otimes  b \\
U^\dag ( b\otimes\1) U=\pm \1\otimes  a \\
\end{cases}& & x=1,\label{eq:chiral}\\
&\begin{cases}
U^\dag ( a\otimes\1) U=\pm \1\otimes  b \\
U^\dag ( b\otimes\1) U=\pm  a\otimes\1 \\
\end{cases} & & x=1/2, \label{eq:inconsistent}\\
&\begin{cases}
U^\dag ( a\otimes\1) U=\pm  b\otimes\1 \\
U^\dag (\1\otimes  b) U=\pm  \1 \otimes  a \\
\end{cases}  & & x=0,\\
&\begin{cases}
U^\dag ( a\otimes\1) U=\pm  b\otimes\1 \\
U^\dag (\1\otimes  b) U=\pm  a\otimes\1 \\
\end{cases}  & & x=-1/2\,.  \label{eq:inconsistent2}
\end{align}
\esu
Let us now show that \eqref{eq:inconsistent} and \eqref{eq:inconsistent2} cannot be fulfilled. Since the reasoning is very similar in the two cases we consider only \eqref{eq:inconsistent}. We proceed by \emph{reductio ad absurdum}. Assuming the two conditions \eqref{eq:inconsistent}, we take the commutator of the two and find 
\be
U^\dag ( [a,b]\otimes\1) U = 0,
\ee
meaning 
\be
[a,b]=0\,.
\ee
Since $a,b\in{\rm End}(\mathbb C^2)$ are Hermitian and traceless  we conclude
\be
b= \pm a\,.
\ee
This implies
\be
a \otimes  \1=\pm \1\otimes  a
\ee
which is possible only if $a=\1$, which is not traceless. This leads to a contradiction. 

Let us now consider \eqref{eq:chiral} and show that the two conditions are equivalent to 
\be
U^\dag ( a\otimes\1) U=\pm \1\otimes  a
\label{eq:chiralsingleequation}
\ee
with $a$ hermitian and traceless. To prove this we distinguish two cases
\begin{itemize}
\item[(i)] $[a,b]=0$
\item[(ii)] $[a,b]\neq0$
\end{itemize}
In the first case we have $a=\pm b$ so that \eqref{eq:chiralsingleequation} is fulfilled. In the second case we see that \eqref{eq:chiralsingleequation} is fulfilled by $[a,b]$. This concludes our analysis. 

\section{Classification of all Local Gates with Solitons}
\label{app:mostgenericgateswithsolitons}
Here we prove the following property
\begin{Prop}
\label{prop:mostgeneralchiralgate}
The most general gate ${U\in {\rm U}(4)}$ fulfilling 
\be
U^\dag (\sigma_3 \otimes\1) U =(-1)^{s} (\sigma_3 \otimes \1)\,,  \qquad s \in \{0,1\}\,.
\label{eq:sym}
\ee
is given by 
\be
U= e^{i \phi} ((\sigma_1)^{s} e^{-i ({\eta_{-}}/{2}) \sigma_3}\otimes u_+)\cdot e^{- i J \sigma_3 \otimes \sigma_3} \cdot (e^{-i ({\mu_{-}}/{2}) \sigma_3}\otimes v_+)\,,
\label{eq:goal}
\ee
where $\phi,\eta_-,\mu_-\in[0,2\pi]$, $J\in[0,\pi/2]$ and $u_+,v_+\in {\rm SU}(2)$. 
\end{Prop}
Note that, solving this problem we also find the most general gate fulfilling 
\be
W^\dag (\1\otimes \sigma_3) W =(-1)^{s} (\sigma_3\otimes \1)
\ee
by posing $W= S U$ where $S=V[\pi/4]$ is the swap-gate \eqref{eq:swap}. This gives 
\be
W= e^{i \phi} (u_+ \otimes(\sigma_1)^{s} e^{-i ({\eta_{-}}/{2})\sigma_3})\cdot V[J-\pi/4]\cdot (e^{-i ({\mu_{-}}/{2})\sigma_3}\otimes v_+)\,.
\ee

In the proof we make often use of the following Lemma
\begin{Lemma}
\label{lemma:obviouslemma}
The most general matrix $u\in SU(2)$ such that 
\be
{\rm tr}[  \sigma_3 u \sigma_3 u^\dag]= 2 (-1)^{s}
\ee
can be written as  
\be
u = (\sigma_1)^{s} e^{i \theta \sigma_3},\qquad \theta\in[0,2\pi]\,.
\ee
\end{Lemma}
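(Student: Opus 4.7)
The plan is to parametrize $u\in SU(2)$ in the Pauli basis and read off the constraint from a direct trace calculation. First I would write $u = \alpha\,\1 + i\,\vec\beta\cdot\vec\sigma$ with $(\alpha,\vec\beta)\in\mathbb{R}^4$ satisfying $\alpha^2 + \beta_1^2+\beta_2^2+\beta_3^2=1$, i.e.\ the standard realization of $SU(2)$ as the unit $3$-sphere. Using $\sigma_3\sigma_j\sigma_3 = -\sigma_j$ for $j=1,2$ and $\sigma_3^3=\sigma_3$, a one-line calculation gives
\begin{equation}
\sigma_3 u\sigma_3 = \alpha\,\1 - i\beta_1\sigma_1 - i\beta_2\sigma_2 + i\beta_3\sigma_3,
\end{equation}
and, by orthonormality of the Pauli basis ($\tfrac{1}{2}\mathrm{tr}[\sigma_j\sigma_k]=\delta_{jk}$, $\mathrm{tr}[\sigma_j]=0$), multiplying by $u^\dagger = \alpha\,\1 - i\vec\beta\cdot\vec\sigma$ and taking the trace yields $\mathrm{tr}[\sigma_3 u\sigma_3 u^\dagger] = 2(\alpha^2+\beta_3^2) - 2(\beta_1^2+\beta_2^2) = 2\bigl(1-2(\beta_1^2+\beta_2^2)\bigr)$.

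Next, I would equate this with $2(-1)^s$ to obtain the sharp dichotomy $\beta_1^2+\beta_2^2 = s\in\{0,1\}$, and then treat the two branches separately. For $s=0$, the condition forces $\beta_1=\beta_2=0$, so $u = \alpha\,\1+i\beta_3\sigma_3$; since $\alpha^2+\beta_3^2=1$, we may set $\alpha=\cos\theta,\ \beta_3=\sin\theta$ and get $u=e^{i\theta\sigma_3}$, which is exactly $(\sigma_1)^0 e^{i\theta\sigma_3}$. For $s=1$, the constraint forces $\alpha=\beta_3=0$ and $\beta_1^2+\beta_2^2=1$, so $u = i(\cos\phi\,\sigma_1+\sin\phi\,\sigma_2)$ for some $\phi\in[0,2\pi]$. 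Factoring $\sigma_1$ to the left and using $\sigma_1\sigma_2=i\sigma_3$ rewrites this as $u = i\sigma_1\,e^{i\phi\sigma_3}$.

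There is essentially no obstacle — the lemma is purely computational. The only minor subtlety worth flagging is the residual overall factor of $i$ in the $s=1$ case: it shows that $(\sigma_1)^s e^{i\theta\sigma_3}$ must be read up to a global phase, which is harmless since the phase is absorbed into the prefactor $e^{i\phi}$ of the gate $U$ in the parent Proposition~\ref{prop:mostgeneralchiralgate} (equivalently, one can shift $\theta\mapsto\theta+\pi/2$ to absorb the $i$ into $e^{i\theta\sigma_3}$ itself modulo an $e^{i\pi\sigma_3}=-\1$ that is then absorbed by the global phase). In either interpretation the two branches together give $u=(\sigma_1)^s e^{i\theta\sigma_3}$ as claimed.
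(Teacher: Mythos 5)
Your proof is correct and takes essentially the same route as the paper's, which simply asserts that representing $(\sigma_1)^{s}u$ in the form $e^{i\vec\alpha\cdot\vec\sigma}$ makes the result immediate and omits the details; your direct parametrization $u=\alpha\1+i\vec\beta\cdot\vec\sigma$ on the unit $3$-sphere plus the Pauli-basis trace computation is the same argument with the steps written out. Your remark on the residual factor of $i$ in the $s=1$ branch is a legitimate observation --- $(\sigma_1)e^{i\theta\sigma_3}$ has determinant $-1$ and so the lemma's formula must be read up to a global phase --- and, as you note, this is harmless since the phase is absorbed into the prefactor $e^{i\phi}$ wherever the lemma is applied.
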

\noindent The proof is immediately obtained by representing $\sigma_1^{s}  u$ as $e^{i \vec \alpha\cdot \vec \sigma }$: for the sake of brevity we omit the details.

To prove property \ref{prop:mostgeneralchiralgate} we consider a generic matrix in $U\in{\rm U}(4)$, which, according to Refs.~\cite{KC:optimalentanglement, VW:optimalquantumcircuits}, is written as 
\be
U=e^{i \phi} (u_-\otimes u_+) V[J_1,J_2,J_3] (v_-\otimes v_+)\,,
\label{eq:O}
\ee
where $\phi\in [0,2\pi]$, $J_1,J_2,J_3\in[0,\pi/2]$, $u_\pm,v_\pm\in {\rm SU}(2)$ and we defined 
\be
V[J_1,J_2,J_3]=\exp[-i (J_1 \sigma_1\otimes\sigma_1 + J_2\sigma_2\otimes\sigma_2+ J_3 \sigma_3\otimes\sigma_3)]\,.
\ee
Plugging \eqref{eq:O} into \eqref{eq:sym} we have 
\be
  V[J_1,J_2,J_3] =(-1)^{s} ( u^\dag_- \sigma_3 u_-  \otimes\1) V[J_1,J_2,J_3] (v_- \sigma_3 v_-^\dag \otimes \1)
\label{eq:relationV}
\ee
and we immediately see that \eqref{eq:sym} does not impose any constraint on the matrices ${u_+,v_+\in {\rm SU}(2)}$ and on the phase $\phi$. Then, using that
\be
V[J_1,J_2,J_3] = \sum_{\beta=0}^3 V_\beta(J_1,J_2,J_3)\, \sigma_\beta \otimes\sigma_\beta \,,
\ee
with  
\begin{align}
&V_0(J_1,J_2,J_3)= \cos(J_1)\cos(J_2)\cos(J_3) - i \sin(J_1)\sin(J_2)\sin(J_3)\,,\\
&V_\beta(J_1,J_2,J_3)=   \cos(J_\beta)\prod_{\alpha\neq\beta}\sin (J_\alpha)-i \sin(J_\beta)\prod_{\alpha\neq\beta}\cos(J_\alpha)\,,& &\beta\in\{1,2,3\}\,,
\end{align}
we express the condition \eqref{eq:relationV} in components (in the basis $\{\sigma_\alpha \otimes\sigma_\beta\}$ ) as follows 
\be
V_\beta(J_1,J_2,J_3) {\rm tr}[ u^\dag_- \sigma_3 u_-\sigma_\beta v_- \sigma_3 v_-^\dag\sigma_\alpha ]= 2(-1)^{s} \delta_{\alpha,\beta} V_\beta(J_1,J_2,J_3)\,.
\label{eq:components}
\ee
We now show that if all $V_\beta(J_1,J_2,J_3)$ are non-zero these conditions cannot be all simultaneously satisfied. To prove it we start considering the case $\alpha=\beta=0$. Since all the coefficients are non zero we have 
\be
{\rm tr}[ \sigma_3 u_- v_- \sigma_3 v_-^\dag u^\dag_-]= 2(-1)^{s} 
\ee
which, using Lemma~\ref{lemma:obviouslemma}, implies  
\be
u_- v_- = (\sigma_1)^s e^{i \theta \sigma_3}\,.
\ee
This also solves all conditions \eqref{eq:components} where one among $\alpha$ and $\beta$ is 0. Considering now $\alpha=\beta=3$ we have 
\be
{\rm tr}[  \sigma_3 u_-\sigma_{3} u_-^\dag  \sigma_3 u_- \sigma_3 u^\dag_- ]= 2
\ee
implying 
\be
u_- \sigma_3 u^\dag_- = \pm \sigma_3\,.
\ee
Considering then $\alpha=\beta=1$ we have 
\be
-2={\rm tr}[\sigma_3\sigma_1  \sigma_3 \sigma_1]= 2,
\ee
which is a contradiction. Therefore, at least one of the coefficients has to vanish. We distinguish two cases
\begin{itemize}
\item[(i)] $V_0(J_1,J_2,J_3)=0$ 
\item[(ii)] $V_{\bar \alpha}(J_1,J_2,J_3)=0,\qquad \bar \alpha \in\{1,2,3\}$ 
\end{itemize} 
Let us start from the case (i): in this case we should have $J_{\bar \alpha}=0$ and $J_{\bar \beta}=\pi/2$ for ${\bar \alpha, \bar \beta\in\{1,2,3\}}$. This means that the only non trivial relations \eqref{eq:components} are 
\begin{align}
&\cos(J_\gamma ){\rm tr}[ u^\dag_- \sigma_3 u_-\sigma_{\bar \beta} v_- \sigma_3 v_-^\dag\sigma_\alpha ]= 2(-1)^{s} \delta_{\alpha,\bar \beta} \cos(J_\gamma ),\\
&\sin (J_\gamma ){\rm tr}[ u^\dag_- \sigma_3 u_-\sigma_{\bar \alpha} v_- \sigma_3 v_-^\dag\sigma_\alpha ]= 2(-1)^{s} \delta_{\alpha,\bar \alpha} \sin(J_\gamma ),
\end{align}
with $\gamma\neq\bar \alpha,\bar \beta$. First we note that, if $J_\gamma=0$ the relations simplify to 
\be
{\rm tr}[ u^\dag_- \sigma_3 u_-\sigma_{\bar \beta} v_- \sigma_3 v_-^\dag\sigma_\alpha ]= 2(-1)^{s} \delta_{\alpha,\bar \beta}.
\label{eq:componentsJgamma0}
\ee
Considering $\alpha=\bar \beta$ using the Lemma~\ref{lemma:obviouslemma} we then conclude 
\be
u_-\sigma_{\bar \beta} v_- = (\sigma_1)^s e^{i \theta \sigma_3}\,.
\label{eq:solJgamma0}
\ee
This also fulfils all other relations \eqref{eq:componentsJgamma0}. Plugging back into \eqref{eq:O} we find 
\be
U = i e^{i \phi} (\sigma_1)^s e^{i \theta \sigma_3} \otimes u_- \sigma_\beta v_-,
\ee
which is of the form \eqref{eq:goal}. Analogous considerations hold for $J_\gamma=\pi/2$. If $J_\gamma\neq0,\pi/2$ the first relations are still solved by \eqref{eq:solJgamma0}, while the second become 
\be
{\rm tr}[\sigma_3 u_-\sigma_{\bar \alpha}\sigma_{\bar \beta} u_-^\dag \sigma_3u_-\sigma_{\alpha}\sigma_{\bar \beta}  u_-^\dag ]= 2 \delta_{\alpha,\bar \alpha}.
\ee 
Using Lemma~\ref{lemma:obviouslemma} we then conclude 
\be
u_-\sigma_{\bar \alpha}\sigma_{\bar \beta} u_-^\dag  = \pm i u_-\sigma_{\gamma}u_-^\dag  = e^{i \theta \sigma_3}
\ee
which implies 
\be
u_-\sigma_{\gamma}u_-^\dag = \pm \sigma_3\,.
\ee
Plugging back into  \eqref{eq:O} again produces a gate of the form \eqref{eq:goal}. Let us now move to the case (ii): this case implies $(J_{\bar \alpha},J_{\bar \beta})=(0,0),(\pi/2,\pi/2)$ for ${\bar \alpha, \bar \beta\in\{1,2,3\}}$. We consider the first option, as the second can be recovered from the first one by noting 
\be
V[\pi/2,\pi/2,J_3]=- i V[0,0,\pi/2+J_3]\,.
\ee
If $J_{\bar \alpha}=J_{\bar \beta}=0$ the conditions \eqref{eq:components} become 
\begin{align}
&\cos(J_\gamma ){\rm tr}[ u^\dag_- \sigma_3 u_- v_- \sigma_3 v_-^\dag\sigma_\alpha ]= 2(-1)^{s} \delta_{\alpha,0} \cos(J_\gamma ),\label{eq:lastcondition1}\\
&\sin (J_\gamma ){\rm tr}[ u^\dag_- \sigma_3 u_-\sigma_{\gamma} v_- \sigma_3 v_-^\dag\sigma_\alpha ]= 2(-1)^{s} \delta_{\alpha,\gamma} \sin(J_\gamma ), && \gamma\neq \bar \alpha,\bar \beta.
\end{align}
If $J_\gamma=0$ the only remaining relation is the first and we have 
\be
u_- v_- = (\sigma_1)^s e^{i \theta \sigma_3}\,,
\label{eq:lastconditionuv}
\ee
which, plugging back into \eqref{eq:O} gives a gate of the form \eqref{eq:goal}. An analogous reasoning applies for $J_\gamma=\pi/2$. Finally, if $J_\gamma\neq 0,\pi/2$ we have that \eqref{eq:lastcondition1} are still solved by \eqref{eq:lastconditionuv} while the second relations give 
\be
u_-\sigma_{\gamma}u_-^\dag = \pm \sigma_3\,.
\ee
This again produces a gate of the form  \eqref{eq:goal} and concludes the proof.

\nolinenumbers

\end{document}